\newtheorem{theorem}{Theorem}[section]
\newtheorem{definition}{Definition}[section]
\newtheorem{assumption}{Assumption}
\newtheorem{lemma}{Lemma}[section]
\newtheorem{nono-lemma}{Lemma}[section]
\newtheoremstyle{remarks}{}{}{}{}{\bfseries}{.}{4pt}{} 
\theoremstyle{remarks}
\newtheorem{rremark}{Remark}[section]
\DeclareMathOperator{\E}{E}
\DeclareMathOperator{\Var}{Var}
\DeclareMathOperator{\Cov}{Cov}
\DeclareMathOperator{\p}{P}
\begin{document}

\title{Robust Discrimination between Long-Range Dependence and a Change in Mean }
\author{Carina Gerstenberger$^*$}
\date{}


\maketitle

\begin{abstract}
\footnotesize
In this paper we introduce a robust to outliers Wilcoxon change-point testing procedure, for distinguishing between short-range dependent time series with a change in mean at unknown time and stationary long-range dependent time series. We establish the asymptotic distribution of the test statistic under the null hypothesis for $L_1$ near epoch dependent processes and show its consistency under the alternative. The Wilcoxon-type testing procedure similarly as the CUSUM-type testing procedure (of Berkes I., Horv{\'a}th L., Kokoszka P. and Shao Q. 2006. Ann. Statist. 34:1140-1165), requires estimation of the location of a possible change-point, and then using pre- and post-break subsamples to discriminate between short and long-range dependence. A simulation study examines the empirical size and power of the Wilcoxon-type testing procedure in standard cases and with disturbances by outliers. It shows that in standard cases the Wilcoxon-type testing procedure behaves equally well as the CUSUM-type testing procedure but outperforms it in presence of outliers. We also apply both testing procedure to hydrologic data.\par 
KEYWORDS: Wilcoxon change-point test statistic; change-point; near epoch dependence; long-range dependence
\end{abstract}

\Footnotetext{*}{Fakult\"at f\"ur Mathematik, Ruhr-Universit\"at Bo\-chum, 
44780 Bochum, Germany}

\section{Introduction}
Since the pioneering work of \citet{Hurst.1951}, \citet{Mandelbrot.1968a} and \citet{Mandelbrot.1968b}, the phenomenon of long-range dependence or Hust effect has been observed in many data sets, e.g. in hydrology, geophysics and economics. A lively debate also rages over the observed Hurst effect is due to long-range dependence or nonstationarity. \citet{Bhattacharya.1983} showed that the Hurst effect detected by $R/S$ statistics can be explained not only by long-range dependence, but by presence of a deterministic trend in short-range dependent data. \citet{Giraitis.2001} showed that some modified $R/S$ statistics reject the hypothesis of short-range dependence for long-range dependence but also for short-range dependent data in presence of a trend or change-points. The phenomenon of spurious long-range dependence has also been discussed in many other papers, see e.g. \citet{Granger.2004}.\par 
A first attempt for distinguishing between long-range dependence and short-range dependence with a monotonic trend was made by \citet{Kuensch.1986}, who showed that the periodogram in these two cases behaves differently. A test allowing to distinguish between a stationary long-range dependent process and short-range dependent process with a change in mean was introduced by \citet{Berkes.2006} and is based on the CUSUM statistic 
\begin{equation}\label{Cusum-TS}
C_{m,n}(k) = \sum_{i=m}^{k}X_i-\frac{k-m+1}{n}\sum_{i=1}^{n}X_i, \qquad m\leq k\leq n.
\end{equation}
It is well known that the CUSUM statistic is sensitive to outliers since it sums up the observations. In this paper we introduce a new robust to outliers testing procedure, which is based on the Wilcoxon change-point test statistic
\begin{equation}\label{Wilcoxon-TS}
W_{m,n}(k) = \sum_{i=m}^{k}\sum_{j=k+1}^{n}(1_{\{X_i\leq X_j\}}-1/2), \qquad m\leq k\leq n.
\end{equation}
\citet{Dehling.2013, Dehling.2015} used this test statistic for testing for changes in the mean of long-range dependent and short-range dependent processes respectively. In both papers the simulation studies point out that the Wilcoxon test statistic (\ref{Wilcoxon-TS}) is more robust to outliers than the CUSUM statistic (\ref{Cusum-TS}). Recently, \citet{Gerstenberger.2016} showed that Wilcoxon-type change-point location estimator for a change in mean of short-range dependent data based on test statistic (\ref{Wilcoxon-TS}) is also robust against outliers.\par

The new Wilcoxon-type testing procedure suggested in this paper is based on the idea of \citet{Berkes.2006}. Firstly, given a sample $X_1,\ldots,X_n$, one estimates the location $\hat{k}$ of a possible change in mean. Then the test statistic is defined as the maximum of the Wilcoxon change-point statistic (\ref{Wilcoxon-TS}) applied to the subsamples $X_1,\ldots,X_{\hat{k}}$ and $X_{\hat{k}+1},\ldots,X_n$.\par

\subsection*{Wilcoxon-type testing procedure}

Assuming that sample $X_1,\ldots,X_n$ is given, we want to test the hypothesis\par\bigskip
\hspace*{15pt}
{\boldmath$H_0$}: $X_i=Y_i+\mu_i$, $i=1,\ldots,n$ is generated by a stationary zero mean short-range dependent process $(Y_j)$ and has a change in mean $\mu_1=\ldots=\mu_{k^*}\neq \mu_{k^*+1}=\ldots=\mu_n$ at unknown time $k^*$,\par\bigskip

against the alternative\par\bigskip
\hspace*{15pt}
{\boldmath$H_1$}: $X_1,\ldots,X_n$ is a sample from a stationary long-range dependent process.
\par\bigskip
Note that during the paper stationary means strictly stationary.
\par
\bigskip
To construct the test statistic, first, we estimate the location $k^*$ of a change-point by a Wilcoxon-type change-point location estimator 
\begin{equation}\label{bp_estimator_wilcoxon}
\hat{k} = \min\Big\{k:\max_{1\leq l < n}\big|W_{1,n}(l) \big|=\big| W_{1,n}(k) \big|\Big\},
\end{equation}
which is defined as the smallest $k$ for which $|W_{1,n}(k)|$ attains its maximum.\par 
Next we divide the sample $X_1,\ldots,X_n$ into subsamples $X_1,\ldots,X_{\hat{k}}$ and $X_{\hat{k}+1},\ldots,X_n$, and set
\begin{equation*}
T(X_1,\ldots,X_n) = n^{-3/2}\max_{1\leq k \leq n}\big|W_{1,n}(k) \big|.
\end{equation*}
Then we compute $T(X_1,\ldots,X_{\hat{k}})$ and $T(X_{\hat{k}+1},\ldots,X_n)$, and denote
\begin{align}
T_{n,1}:= T(X_1,\ldots,X_{\hat{k}}) &=\hat{k}^{-3/2}\max_{1\leq k \leq \hat{k}}\big|W_{1,\hat{k}}(k) \big|,\label{T_n1}\\
T_{n,2}:= T(X_{\hat{k}+1},\ldots,X_{n})&=(n-\hat{k})^{-3/2}\max_{\hat{k}< k \leq n}\big| W_{\hat{k}+1,n}(k) \big|.\label{T_n2}
\end{align}
Finally, we define the test statistic
\begin{equation}\label{WTP}
M_n = \max\{T_{n,1},T_{n,2}\}.
\end{equation}
\par
We show that $T(X_1,\ldots,X_n)$ allows to discriminate whether the sample has been generated by a short or long-range dependent stationary process. Hence, if we split the sample at time $\hat{k}$, which is close to the true change-point $k^*$, since $\hat{k}/k^*\rightarrow_p 1$ asymptotically we can assume that $X_1,\ldots,X_{\hat{k}}$ and $X_{\hat{k}+1},\ldots,X_n$ are samples from a stationary sequence with a constant mean. Subsequently, $M_n$ can be used to test if the samples $X_1,\ldots,X_{\hat{k}}$ and $X_{\hat{k}+1},\ldots,X_n$ have been generated by a short-range or long-range dependent stationary process.\par \bigskip

The outline of the paper is as follows. Section \ref{Assumptions and Main Results} specifies assumptions allowing to establish asymptotic distribution of $M_n$ under $H_0$ and consistency under $H_1$. Section \ref{Simulation Results} compares finite sample performance of the Wilcoxon-type and the CUSUM-type testing procedure. An application to hydrologic data is given in Section \ref{Data Example}. All proofs are given in Section \ref{Proofs}.

\section{Definitions, assumptions and main results}\label{Assumptions and Main Results}
In this section we present main assumptions, definitions and main results. \par 
Throughout the paper, $C$ denotes a generic non-negative constant, which may vary from time to time. 
The notation $a_n \sim b_n$ means that sequences $a_n$ and $b_n$ of real numbers have property $a_n/b_n\rightarrow c$, as $n\rightarrow \infty$, where $c\neq 0$. $\xrightarrow{d}$ and $\rightarrow_p$ stand for convergence in distribution and probability, respectively. By $\overset{d}{=}$ we denote equality in distribution. $\|g\|_{\infty}=\sup_{x}|g(x)|$ denotes the supremum norm of a function $g$.

\subsection*{Null hypothesis: short-range dependence with a change in mean}

Under the null hypothesis we assume the random variables $X_1,\ldots,X_n$ follow the change-point model
\begin{equation}\label{model_srd}
X_i = \begin{cases}
Y_i + \mu &, 1\leq i \leq k^*\\
Y_i + \mu + \Delta_n &, k^* < i \leq n,
\end{cases}
\end{equation}
where $k^*$ denotes the unknown location of the change-point in the mean, $\Delta_n$ denotes the unknown magnitude of change (see Assumption \ref{H2}) and $(Y_j)$ is a zero-mean strictly stationary short-range dependent process.\par
\bigskip

To cover a wide range of processes, we assume that the underlying process $(Y_j)$ can be written as $Y_j=f(Z_j,Z_{j-1},Z_{j-2},\ldots)$, $j\in \mathbb{Z}$, where $f:\mathbb{R}^{\mathbb{Z}}\rightarrow\mathbb{R}$ is a measurable function, and $(Z_j)$ is an absolutely regular (weakly dependent) process.

\begin{definition}
A stationary process $(Z_j)$ is called absolutely regular (or $\beta$-mixing) if
\begin{equation}\label{beta_k}
\beta_k = \sup_{n\geq 1}\E \sup_{A\in \mathcal{G}_{-\infty}^n} \left| \p\left(A|\mathcal{G}_{n+k}^{\infty}\right)-\p\left(A\right)\right| \rightarrow 0,
\end{equation}
as $k\rightarrow\infty$, where $\mathcal{G}_k^m$ is the $\sigma$-field generated by random variables $Z_k,\ldots,Z_m$, $k<m$.
\end{definition}
Absolute regularity or $\beta$-mixing implies the weaker property of $\alpha$-mixing, see e.g. \citet{Bradley.2002}.\par 

In addition, we will assume that $(Y_j)$ satisfies near epoch dependence condition, i.e. $Y_j$ depends on the near past of $(Z_j)$.

\begin{definition}
A stationary process $(Y_j)$ is $L_1$ near epoch dependent ($L_1$ NED) on some stationary process $(Z_j)$ with approximation constants $a_k$, $k\geq 0$, if
\begin{equation}\label{ned_condition}
\E|Y_1-\E(Y_1|\mathcal{G}_{-k}^{k})|\leq a_k, \qquad k=0,1,2,\ldots
\end{equation}
where $\mathcal{G}_{-k}^k$ is the $\sigma$-field generated by random variables $Z_{-k},\ldots,Z_k$ and $a_k\rightarrow 0$ as $k\rightarrow\infty$.
\end{definition}

Notice that a linear process or AR process might not be absolutely regular, but it would be $L_1$ near epoch dependent; see Example 2.1 in \citet{Gerstenberger.2016} for linear processes and \citet{Hansen.1991} for GARCH(1,1) processes. More examples of $L_1$ NED processes can be found in \citet{Borovkova.2001}, who also discuss more general $L_r$ NED processes, $r\geq 1$. The concept of $L_1$ near epoch dependence only assumes existence of the first moment $\E|Y_1|$. Therefore, we can allow heavy-tailed distributions.
\par \bigskip

We need further additional assumptions on the distribution function $F$ of $Y_1$,  the mixing coefficients $\beta_k$ in (\ref{beta_k}) and $a_k$ in (\ref{ned_condition}).

\begin{assumption}\label{H1}
The process $(Y_j)$ in (\ref{model_srd}) is $L_1$ NED on some absolutely regular process $(Z_j)$ with mixing coefficients $\beta_k$ and approximation constants $a_k$ such that
\begin{align}\label{cond_a_beta}
\sum_{k=1}^{\infty}k^2(\beta_k+\sqrt{a_k})<\infty.
\end{align}
Moreover, $Y_1$ has a continuous distribution function $F$ with bounded second derivative, and variables $Y_1-Y_k$, $k\geq 1$ satisfy
\begin{equation}\label{Bed_gem_Verteilung}
\p(x\leq Y_1-Y_k\leq y)\leq C|y-x|,
\end{equation}
for all $x\leq y$, where $C$ does not depend on $k$ and $x,y$. 
\end{assumption}
\par\bigskip

We suppose that both, the unknown change-point $k^*$ and the magnitude of change $\Delta_n$ in (\ref{model_srd}), depend on the sample size $n$.

\begin{assumption}\label{H2}
\begin{enumerate}[a)]
\item The change-point $k^*=[n\theta]$, where $0<\theta<1$ is fixed, is proportional to the sample size $n$. 
\item The magnitude of change $\Delta_n$ in (\ref{model_srd}) depends on $n$, and is such that
\begin{equation*} \label{assumption_size_change_infty}
\Delta_n \rightarrow 0, \qquad n\Delta_n^2 \rightarrow \infty, \qquad n\rightarrow \infty.
\end{equation*}
\end{enumerate}
\end{assumption}

An important step of our testing procedure is the estimation of the location $k^*$ of the change-point in mean. \citet{Gerstenberger.2016} showed that under Assumptions \ref{H1} and \ref{H2} the Wilcoxon-type change-point location estimator $\hat{k}$ in (\ref{bp_estimator_wilcoxon}) is consistent,
\begin{equation}\label{rate_consistency_bp_estimator}
\Delta_n^2 \big|\hat{k}-k^*\big|=O_P(1), \qquad \text{as}\; n\rightarrow \infty.
\end{equation}

\subsection*{Alternative: long-range dependence}

Under alternative $H_1$, the sample $X_1,\ldots,X_n$ is generated by a stationary long-range dependent process:
\begin{equation}\label{model_lrd}
X_i = G(\xi_i) + \mu, \qquad i=1,\ldots,n,
\end{equation}
where $\mu$ is the unknown mean and $(\xi_j)$ is a stationary long memory Gaussian process with $\E(\xi_1)=0,$ $\Var(\xi_1)=1$ and (non-summable) auto-covariances $\gamma_k=\Cov(\xi_1,\xi_{1+k})\sim k^{2d-1}c_0$, where $c_0>0$ and $d\in(0,1/2)$. Furthermore, we assume that $G:\mathbb{R}\rightarrow\mathbb{R}$ is a measurable, strictly monotone function such that $\E(G(\xi_1))=0$.

\subsection*{Main results}

The following theorem derives the limit distribution of the test procedure under the null hypothesis $H_0$. Below, $B(t) = W(t)-tW(1)$ denotes a standard Brownian bridge, where $W(t)$ is a standard Brownian motion.

\begin{theorem}\thlabel{theorem_hypothesis}
Let $\left(X_j\right)$ follow the model in (\ref{model_srd}). Then, under Assumptions \ref{H1} and \ref{H2},
\begin{equation}\label{claim_th}
 M_n = \max\{T_{n,1},T_{n,2}\} \xrightarrow{d}
\sigma\max\Big\{\sup_{0\leq t \leq 1}\big|B^{(1)}\left(t\right)\big|,\sup_{0\leq t \leq 1}\big|B^{(2)}\left(t\right)\big|\Big\} =:\sigma Z
\end{equation}
where $B^{(1)}$ and $B^{(2)}$ are two independent Brownian bridges,
\begin{equation}\label{lrv_sigma2}
\sigma^2 = \sum_{k=-\infty}^{\infty}\Cov\left(F(Y_0),F(Y_k)\right),
\end{equation}
and $F$ denotes the distribution function of $Y_1$.
\end{theorem}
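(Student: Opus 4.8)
The plan is to reduce the statement to a functional central limit theorem for the Wilcoxon statistic of a single \emph{stationary} sequence, and then to exploit the fact that the two subsamples occupy asymptotically disjoint time intervals. First I would use the consistency of the change-point estimator, $\Delta_n^2|\hat{k}-k^*|=O_P(1)$ together with $n\Delta_n^2\to\infty$ from Assumption \ref{H2}, which forces $\hat{k}/n\rightarrow_p\theta$ and makes the number of ``misclassified'' observations (those lying on the wrong side of the true break $k^*$) of order $O_P(\Delta_n^{-2})=o_P(n)$. Invoking \thref{lemma1}, this lets me replace each subsample Wilcoxon statistic by the one computed on the underlying stationary sequence $(Y_j)$ with constant mean. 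Since the statistic is invariant under a common shift of all observations, only the $O_P(\Delta_n^{-2})$ misclassified points matter, and each of them flips an indicator $1_{\{X_i\le X_j\}}$ only on a set of probability $O(\Delta_n)$ by (\ref{Bed_gem_Verteilung}); the total replacement error in the double sum is thus $O_P(n\Delta_n^{-1})=o_P(n^{3/2})$ uniformly in the split, and after division by $\hat{k}^{3/2}$ and $(n-\hat{k})^{3/2}$ it is $o_P(1)$. This is precisely where $n\Delta_n^2\to\infty$ enters.

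For a stationary block $Y_{m+1},\ldots,Y_{m+N}$ I would apply the Hoeffding decomposition to the kernel $h(x,y)=1_{\{x\le y\}}-1/2$, whose projections are $\E[h(x,Y)]=1/2-F(x)$ and $\E[h(Y,y)]=F(y)-1/2$. Writing $U_i=F(Y_i)-1/2$, which has mean zero since $F(Y_1)$ is uniform (continuity of $F$), the linear part telescopes to
\begin{equation*}
W_{m+1,m+N}(m+k)=-N\sum_{i=1}^{k}U_{m+i}+k\sum_{i=1}^{N}U_{m+i}+R_N(k),
\end{equation*}
with $R_N$ the degenerate remainder. A maximal inequality for the degenerate $U$-process, using the $L_1$ NED structure, the Lipschitz property of $F$ (bounded derivatives), and the summability $\sum_k k^2(\beta_k+\sqrt{a_k})<\infty$ of Assumption \ref{H1}, shows $\max_k|R_N(k)|=o_P(N^{3/2})$. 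The partial-sum process of $(U_i)$ satisfies an invariance principle, $N^{-1/2}\sum_{i=1}^{[Ns]}U_i\xrightarrow{d}\sigma W(s)$ in the Skorokhod space with $\sigma^2=\sum_k\Cov(F(Y_0),F(Y_k))$ as in (\ref{lrv_sigma2}); hence $N^{-3/2}W_{m+1,m+N}(m+[Ns])\xrightarrow{d}-\sigma\big(W(s)-sW(1)\big)=-\sigma B(s)$, and by the continuous mapping theorem $N^{-3/2}\max_k|W|\xrightarrow{d}\sigma\sup_s|B(s)|$.

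It remains to treat the two subsamples jointly. The statistic $T_{n,1}$ depends only on $U_1,\ldots,U_{\hat{k}}$ and $T_{n,2}$ only on $U_{\hat{k}+1},\ldots,U_n$, i.e.\ on blocks occupying the asymptotically disjoint intervals $[0,\theta]$ and $[\theta,1]$. I would therefore establish the invariance principle for the full partial-sum process $\big(n^{-1/2}S_{[nu]}\big)_{u\in[0,1]}$ and read off, using $\hat{k}/n\rightarrow_p\theta$, that the two rescaled sub-bridges converge to $B^{(1)}(s)=W^{(1)}(s)-sW^{(1)}(1)$ and $B^{(2)}(s)=W^{(2)}(s)-sW^{(2)}(1)$, built from the increments of the limiting Brownian motion $W$ over $[0,\theta]$ and $[\theta,1]$ respectively. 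Since $W$ has independent increments, $B^{(1)}$ and $B^{(2)}$ are independent standard Brownian bridges. Applying the continuous mapping theorem to the pair and then to the maximum yields the claimed limit $\sigma\max\{\sup_s|B^{(1)}(s)|,\sup_s|B^{(2)}(s)|\}$.

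The main obstacle I expect is twofold: controlling the degenerate remainder $R_N(k)$ \emph{uniformly} in the split point $k$, which requires the strong summability in Assumption \ref{H1} and a careful moment bound for the two-parameter $U$-process; and handling the \emph{random, data-dependent} endpoint $\hat{k}$ throughout, both in the reduction of the first step and when passing to the limiting independent bridges. The latter calls for an equicontinuity argument ensuring that replacing $\hat{k}$ by $[n\theta]$ in the normalisations and block boundaries costs only $o_P(1)$.
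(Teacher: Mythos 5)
Your proposal is correct and follows essentially the same route as the paper's proof: your first step is the paper's \thref{lemma1} (reduction to the stationary sequence via $\Delta_n^2|\hat{k}-k^*|=O_P(1)$ and the bound $P(0<Y_i-Y_j\leq\Delta_n)\leq C\Delta_n$ from (\ref{Bed_gem_Verteilung}), giving exactly the paper's error rate $n\Delta_n^{-1}=o_P(n^{3/2})$), and your second and third steps reproduce \thref{lemma2} (Hoeffding decomposition with $h_1(x)=1/2-F(x)$, a maximal inequality killing the degenerate part as in \thref{g_negligible2}, the Wooldridge--White-type invariance principle for $F(Y_i)-1/2$ as in \thref{remark_h1}, and the Skorokhod representation plus $\hat{k}/n\rightarrow_p\theta$ and independent increments of $W$ to obtain the two independent bridges). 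The only cosmetic difference is that you bound the replacement error by a direct Markov/counting argument on the non-negative indicator sums, where the paper's \thref{Lemma_abschaetzung_bp_kdach} proves the sharper two-sided equivalence with $k^*(\hat{k}-k^*)\Theta_{\Delta_n}$; your simpler bound suffices for the $o_P(1)$ conclusion once one restricts to the event $\{|\hat{k}-k^*|\leq M\Delta_n^{-2}\}$ to decouple the random index.
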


Since the limit distribution of $M_n$ depends on the long-run variance $\sigma^2$, to calculate the critical values for the test, we need to estimate the long-run variance; see Section \ref{Simulation Results}. \par
\bigskip

We will compare performance of our test with the CUSUM-type test by \citet{Berkes.2006} defined as
\begin{equation}\label{CTP}
\tilde{M}_{C,n} =  \max\{\tilde{T}_C(X_1,\ldots,X_{\tilde{k}_C}),\tilde{T}_C(X_{\tilde{k}_C+1},\ldots,X_n)\},
\end{equation}
where
\[
\tilde{T}_C(X_1,\ldots,X_n) = (\hat{s}_n\sqrt{n})^{-1}\max_{1\leq k \leq n}\big|C_{1,n}(k)\big|,
\]
is based on the CUSUM statistic $C_{1,n}(k)$ in (\ref{Cusum-TS}). $
\tilde{k}_C  = \min\Big\{ k: \max_{1\leq l \leq n}\big| C_{1,n}(l) \big| =  \big| C_{1,n}(k) \big|  \Big\}$
is a CUSUM-type estimator of $k^*$ and $\hat{s}_n^2$ is a long-run variance estimator of $\sigma_c^2 = \sum_{k=-\infty}^{\infty}\Cov\left(Y_0,Y_k\right)$ given in (\ref{bartlett}).
 \citet{Berkes.2006} showed that under their assumptions under the null hypothesis, $\tilde{M}_{C,n}\xrightarrow{d} Z$.
\par\bigskip

The next theorem establishes consistency of the test $M_n$, i.e. that the test will detect long-range dependence with probability tending to 1.

\begin{theorem}\thlabel{theorem_alternative}
Let $(X_j)$ be as in (\ref{model_lrd}). Then, as $n\rightarrow\infty$,
\[
M_n \rightarrow_p \infty.
\]
\end{theorem}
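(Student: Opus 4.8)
The plan is to show that under long-range dependence the Wilcoxon statistic grows at the faster rate $n^{3/2+d}$ rather than the rate $n^{3/2}$ that governs the short-range case in \thref{theorem_hypothesis}, so that the normalising factor $n^{-3/2}$ no longer stabilises it and every piece of $M_n$ diverges.

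First I would linearise the Wilcoxon statistic by its Hájek projection. Writing $g(x,y)=1_{\{x\le y\}}-1/2$ and integrating out one argument against the marginal $F$ gives the projections $h_1(x)=\int g(x,y)\,dF(y)=1/2-F(x)$ and $h_2(y)=\int g(x,y)\,dF(x)=F(y)-1/2$, so that with $S_k=\sum_{i=1}^{k}\big(F(X_i)-1/2\big)$ one has the algebraic identity
\begin{equation*}
\sum_{i=1}^{k}\sum_{j=k+1}^{m}\big(h_1(X_i)+h_2(X_j)\big)=k S_m-m S_k .
\end{equation*}
The key analytic input is a reduction principle, as in \citet{Dehling.2013}, showing that the degenerate remainder is uniformly negligible, namely $\max_{1\le k\le m\le n}\big|W_{1,m}(k)-(kS_m-mS_k)\big|=o_P(n^{3/2+d})$. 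This is the step I expect to be the main obstacle, because it must hold uniformly not only in the split point $k$ but also over the subsample endpoint $m$, which is evaluated at the random value $m=\hat{k}$.

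Next I would feed in the long-range behaviour of the partial sums $S_{[n\cdot]}$. Under (\ref{model_lrd}) strict monotonicity of $G$ yields $F(G(z)+\mu)=\Phi(z)$, hence $F(X_i)-1/2=\pm\big(\Phi(\xi_i)-1/2\big)$, a function of the long-memory Gaussian sequence $(\xi_i)$ whose Hermite rank equals one, its first Hermite coefficient being $\int\phi^2>0$. The limit theorem for Hermite-rank-one functionals of a long-memory Gaussian sequence with $\gamma_k\sim c_0 k^{2d-1}$ then gives the invariance principle $n^{-(1/2+d)}S_{[nt]}\xrightarrow{d}\sigma_d B_H(t)$ in $D[0,1]$, where $B_H$ is fractional Brownian motion with $H=d+1/2$ and $\sigma_d>0$. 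Combined with the linearisation this shows $n^{-(3/2+d)}W_{1,n}([nt])\xrightarrow{d}-\sigma_d\big(B_H(t)-tB_H(1)\big)$, so that $n^{-3/2}\max_k|W_{1,n}(k)|$ is of exact order $n^{d}$.

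Finally I would transfer this to $M_n$. The full-sample bridge, its location of maximum (which is $\hat{k}/n$), and the two subsample bridges on $[0,\hat{k}]$ and $[\hat{k},n]$ are all continuous functionals of the single limit $B_H$, so the continuous mapping theorem yields jointly $\hat{k}/n\xrightarrow{d}\hat\tau\in(0,1)$ and
\begin{equation*}
n^{-d}T_{n,1}\xrightarrow{d}\sigma_d\,\hat\tau^{-1/2}\sup_{0\le v\le 1}\big|B_H(\hat\tau v)-vB_H(\hat\tau)\big|,
\end{equation*}
with an analogous limit for $n^{-d}T_{n,2}$ based on $[\hat\tau,1]$. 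Since $0<\hat\tau<1$ almost surely and fractional Brownian motion is almost surely non-linear on every subinterval, both limits are almost surely strictly positive, whence $n^{-d}M_n\xrightarrow{d}\zeta$ with $\p(\zeta>0)=1$. Because $Kn^{-d}\to0$ for every fixed $K$, we get $\p(M_n>K)=\p(n^{-d}M_n>Kn^{-d})\ge\p(n^{-d}M_n>\epsilon)\to\p(\zeta\ge\epsilon)$ for small $\epsilon$, and letting $\epsilon\downarrow0$ gives $\p(M_n>K)\to1$, i.e. $M_n\rightarrow_p\infty$. One could even bypass the limit of $\hat\tau$ by noting that the larger of the two subsamples always has length at least $n/2$, so the statistic built on it alone already diverges at rate $n^{d}$.
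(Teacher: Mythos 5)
Your proposal is correct and follows essentially the same route as the paper's proof: reduction to the Gaussian sequence via strict monotonicity of $G$, Hermite rank one with first coefficient $\int\varphi^2 = 1/(2\sqrt{\pi})$, the non-central invariance principle giving fractional Brownian bridge limits under the normalization $n^{d+3/2}$ (the paper's \thref{lemma_konv_alternative} and \thref{lemma_ld_alternative}, proved via the Dehling--Taqqu empirical process theorem following \citet{Dehling.2013}, which is exactly the uniform linearisation step you flag as the main obstacle), joint continuous mapping at the random split point $\hat{k}/n\xrightarrow{d}\zeta$, and divergence of $M_n$ at rate $n^{d}$. The only cosmetic difference is that you phrase the uniformity over both the split point $k$ and the random endpoint $m=\hat{k}$ as a Hoeffding-type reduction principle, while the paper achieves it through process-level convergence of the two-parameter field $\tilde{U}_n(s,t)$ and the identity $\tilde{U}_{n,1}(t)=\sup_{0\leq s\leq t}\big|\tilde{U}_n(s,s)-\tilde{U}_n(s,t)\big|$.
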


Under the alternative in (\ref{model_lrd}) we do not consider the long memory Gaussian process itself, but a function of it. This concept also allows non-Gaussianity. We restrict the result of \thref{theorem_alternative} to strictly monotone functions due to simplicity of the proof. But the result can also be expanded to more general functions $G(\cdot)$. In this case the dependence structure of $(G(\xi_i))$ is in general not clear. Proposition 1.2 of \citet{Rooch.2012} yields that under slight assumptions if $\gamma_k \sim c_0k^{2d-1}$, $c_0>0$, $d\in (0,1/2)$ then $\Cov(G(\xi_i),G(\xi_{i+k}))\sim (c_0/m!) k^{(2d-1)m}$, where $m$ is the Hermite rank of $G$ (see Section \ref{Proof_alternative} for more details about Hermite rank). Therefore, for $-1<(2d-1)m<0$, the process $(G(\xi_i))$ is still long range dependent.\par
\bigskip
Proofs of \thref{theorem_hypothesis} and \ref{theorem_alternative} are given in Section \ref{Proofs}.
 
\section{Simulation Study}\label{Simulation Results}

In this simulation study we compare the finite sample performance (size and power) of the Wilcoxon-type testing procedure $M_n$ in (\ref{WTP}) with the CUSUM-type testing procedure $\tilde{M}_{C,n}$ of \citet{Berkes.2006}, given in (\ref{CTP}).\par
\bigskip
\textbf{Simulation set up}\par
To calculate the \textit{empirical size} we generate the sample of random variables $X_1,\ldots,X_n$ using the change-point model
\begin{equation}\label{simu_model_srd}
X_i = \begin{cases}
Y_i + \mu &, 1\leq i \leq k^*\\
Y_i + \mu + \Delta &, k^* < i \leq n,
\end{cases}
\end{equation}
where $Y_i = \rho Y_{i-1} + \epsilon_i$ is an AR(1) process with $\rho=0.4$. The innovations $\epsilon_i$ are generated from a standard normal distribution and a Student's t-distribution with $\nu=1$ degree of freedom. We set $k^*=[n\theta]$, $\theta = 0.25,0.5,0.75$ and $\Delta = 0.5,1,2$. \par
 Note that $t_1$-distributed innovations do not satisfy the $L_1$ NED condition, since $L_1$ NED requires the existence of $\E|Y_1|$. However, $t_1$-distributed innovations are included in the simulation study, since it proofs the functionality of Wilcoxon-type testing procedure even in the case of extremely heavy tails.\par 
 
To evaluate the \textit{empirical power} of the test we generate a sample $X_1,\ldots,X_n$ of fractional Gaussian noise (fGn)
\begin{equation}\label{simu_model_lrd}
X_i = W_H(i+1) - W_H(i),
\end{equation}
where $W_H(t)$, $H=d+1/2\in(1/2,1)$ is a fractional Brownian motion, see e.g. \citet{Mandelbrot.1968a}. The sequence $(X_j)$ is a long-range dependent process: $\Cov(X_1,X_{1+k})\sim k^{2d-1}c_0$ with long-range dependence parameter $d\in(0,1/2)$. We consider $d=0.1,0.2,0.3,0.4$. \par 

To analyse the robustness of Wilcoxon and CUSUM testing procedures to \textit{outliers}, we replace observations $X_{[0.2n]}, X_{[0.4n]}, X_{[0.6n]}, X_{[0.8n]}$ in the sample $(X_1,\ldots,X_n)$ (under the null hypothesis or alternative) by outliers $50X_{[0.2n]}, 50X_{[0.4n]}, 50X_{[0.6n]}$ and $50X_{[0.8n]}$.\par 
We consider sample sizes $n=200,500,1000,2000,5000$. All simulation results are based on $10,000$ replications.\par \bigskip

\textbf{Critical values}\par \bigskip

To analyse the empirical size and power, we need to know the critical values for the tests $M_n$ and $\tilde{M}_{C,n}$.\par 
By \thref{theorem_hypothesis}, under the null hypothesis, 
\begin{equation*}
M_n = \max\big\{T_{n,1},T_{n,2}\big\}\xrightarrow{d} \sigma Z.
\end{equation*}
Hence, if $\hat{\sigma}^2(X_1,\ldots,X_k)$ is a consistent estimator for the long-run variance $\sigma^2$ based on the sample $X_1,\ldots,X_k$, then 
\[
\hat{M}_n=\max\Big\{\frac{T_{n,1}}{\hat{\sigma}(X_1,\ldots,X_{\hat{k}})},\frac{T_{n,2}}{\hat{\sigma}(X_{\hat{k}+1},\ldots,X_n)}\Big\}\xrightarrow{d} Z.
\]
The same asymptotics holds for the CUSUM test: $\tilde{M}_{C,n}\xrightarrow{d} Z$, see Corollary 2.1 of \citet{Berkes.2006}. Thus, the critical value $c_{\alpha}$ for a given significance level $\alpha$ is obtained by solving
\begin{equation}\label{cv}
\p\big(Z>c_{\alpha}\big) = \alpha.
\end{equation}
 Since $B^{(1)}$ and $B^{(2)}$ are independent Brownian bridges, (\ref{cv}) reduces to
\begin{equation}\label{critical_values}
\p\Big(\sup_{0\leq t \leq 1}\big|B^{(1)}(t)\big|\leq c_{\alpha}\Big) = (1-\alpha)^{1/2},
\end{equation}
where $\sup_{0\leq t \leq 1}\big|B^{(1)}(t)\big|$ has the well-known Kolmogorov-Smirnov distribution, and its quantiles can be found in statistical tables. For $\alpha=5\%$ (\ref{critical_values}) implies $c_{5\%}=1.478$.\par 
\bigskip

\textbf{Estimation of long-run variance}\par \bigskip

The selection of a long-run variance estimate $\hat{\sigma}$ in $\hat{M}_n$  has a strong impact on the size and power properties of the tests in finite samples.\par \bigskip

To estimate the long-run variance $\sigma_c^2 = \sum_{k=-\infty}^{\infty}\Cov\left(Y_0,Y_k\right)$ in $\tilde{M}_{C,n}$ in (\ref{CTP}), \citet{Berkes.2006} suggested to use the Bartlett estimator 
\begin{equation}\label{bartlett}
\hat{s}_n^2 = \frac{1}{n}\sum_{i=1}^{n}\left(X_i-\bar{X}_n\right)^2+2\sum_{j=1}^{q\left(n\right)}\Big(1-\frac{j}{q+1}\Big)\frac{1}{n}\sum_{i=1}^{n-j}\left(X_i-\bar{X}_n\right)\left(X_{i+j}-\bar{X}_n\right),
\end{equation}
where $\bar{X}_n=n^{-1}\sum_{i=1}^nX_i$, with the bandwidth $q\left(n\right) = C\log_{10}\left(n\right)$. Table \ref{Table_CB} reports the empirical size (for $\theta=0.5$, $\Delta=1$) and power (for $d=0.4$) in $\%$ at significance level $5\%$ of $\tilde{M}_{C,n}$ test, with $\hat{s}_n^2$ as in (\ref{bartlett}) computed with bandwidth $15\log_{10}\left(n\right)$.  It shows that $\tilde{M}_{C,n}$ with Bartlett estimator $\hat{s}_n^2$ is too conservative and has low power against the alternative, which has also been pointed out by \citet{Baek.2012} and \citet{Dette.2013}.\par 

\begin{table}[h]
\begin{tabular}{c|c|c|c|c}
n = & 500 & 1000 & 2000 & 5000\\\hline
emp. size & 0.05 & 0.87 & 2.48 & 3.79\\\hline
power & 0.30 & 7.62 & 27.44 & 60.51
\end{tabular}
\caption{Empirical size and power of $\tilde{M}_{C,n}$ test using the Bartlett estimator.}\label{Table_CB}
\end{table}

In our simulation study to improve the performance of $\tilde{M}_{C,n}$ test we proceed as follows.
To estimate $\sigma^2_C$, instead of $\hat{s}_n^2$, we use the non-overlapping subsampling estimator of $\sigma_C^2$ by \citet{Carlstein.1986}, with block length $l_n$,
\begin{equation}\label{Carlstein_estimator}
\hat{\sigma}^2_C = \frac{1}{\left[n/l_n\right]}\sum_{i=1}^{\left[n/l_n\right]} \frac{1}{l_n}\bigg( \sum_{j=\left(i-1\right)l_n+1}^{il_n} X_j - \frac{l_n}{n}\sum_{j=1}^{n}X_j \bigg)^2,
\end{equation}
which yields better size and power balance for $\tilde{M}_{C,n}$, as seen from Tables \ref{Table_empirical_size_both_without} and \ref{Table_empirical_power}. This estimator has also been used by \citet{Dehling.2015} for a CUSUM-type test for changes in the mean of a short-range dependent process.\par 
In turn, for our test $\hat{M}_n$ to estimate $\sigma$ we shall use the Carlstein type estimator for long-run variance proposed by \citet{Dehling.2013b},
\begin{equation}\label{estimator_lrv}
\hat{\sigma}_W= \frac{1}{\left[n/l_n\right]}\sqrt{\frac{\pi}{2}}\sum_{i=1}^{\left[n/l_n\right]}\frac{1}{\sqrt{l_n}}\bigg|\sum_{j=(i-1)l_n+1}^{il_n}F_n(X_j)-\frac{l_n}{n}\sum_{j=1}^{n}F_n(X_j) \bigg|,
\end{equation}
where $F_n\left(x\right) = n^{-1}\sum_{i=1}^{n}1_{\{X_i\leq x\}}$. Note that $\hat{\sigma}_W$ estimates $\sigma$, not $\sigma^2$.\par 

The Carlstein estimator $\hat{\sigma}^2_C$ as well as the estimator $\hat{\sigma}_W$ (\ref{estimator_lrv}) are subsampling type estimators and require to choose a suitable block length $l_n$. The choice of $l_n$ is widely discussed in the literature. For AR(1)-processes \citet{Carlstein.1986} suggests to use
\begin{equation}\label{block_length_carlstein}
l_n = \max\big\{\big\lceil n^{1/3}(2\rho/(1-\rho^2))^{2/3}\big\rceil,1\big\},
\end{equation}
where $\rho$ denotes the autocorrelation coefficient at lag 1. In our simulation study we use this block length with $\rho$ estimated by the sample autocorrelation coefficient $\hat{\rho}$ since it yields good results for the empirical size and power.\par 
In the presence of outliers, we need to robustify further the choice of the block length. Since the sample autocorrelation is highly sensitive to outliers, we use in (\ref{block_length_carlstein}) a robust estimator of $\rho$ proposed by \citet{Ma.2000}, 
\begin{equation*}
\hat{\rho}_Q = \frac{Q_{n-1}^2(u+v)-Q_{n-1}^2(u-v)}{Q_{n-1}^2(u+v)+Q_{n-1}^2(u-v)},
\end{equation*}
where $Q_n(x)=2.21914\{|X_i-X_j|;i<j\}_{(k)}$, $x=(X_1,\ldots,X_{n})$, which is the $k={{n}\choose{2}}/4$-th order statistic of the ${n}\choose{2}$ interpoint distances, is a robust scale estimator introduced by \citet{Rousseeuw.1993}, $u = (X_1,\ldots,X_{n-1})$ and $v=(X_{2},\ldots,X_n)$. Figure \ref{histogram.rho} contains the histogram of estimates $\hat{\rho}$ and $\hat{\rho}_Q$ based on 10,000 replications of sample $X_1,\ldots,X_{500}$ with outliers, generated by an AR(1) model with $\rho=0.4$ and i.i.d. standard normal innovations. For a further discussion on robust estimation of autocorrelation function see \citet{Duerre.2015}.\par \bigskip

 \begin{figure}
  \begin{subfigure}[c]{0.49\textwidth}
\includegraphics[width=7cm, height=5cm]{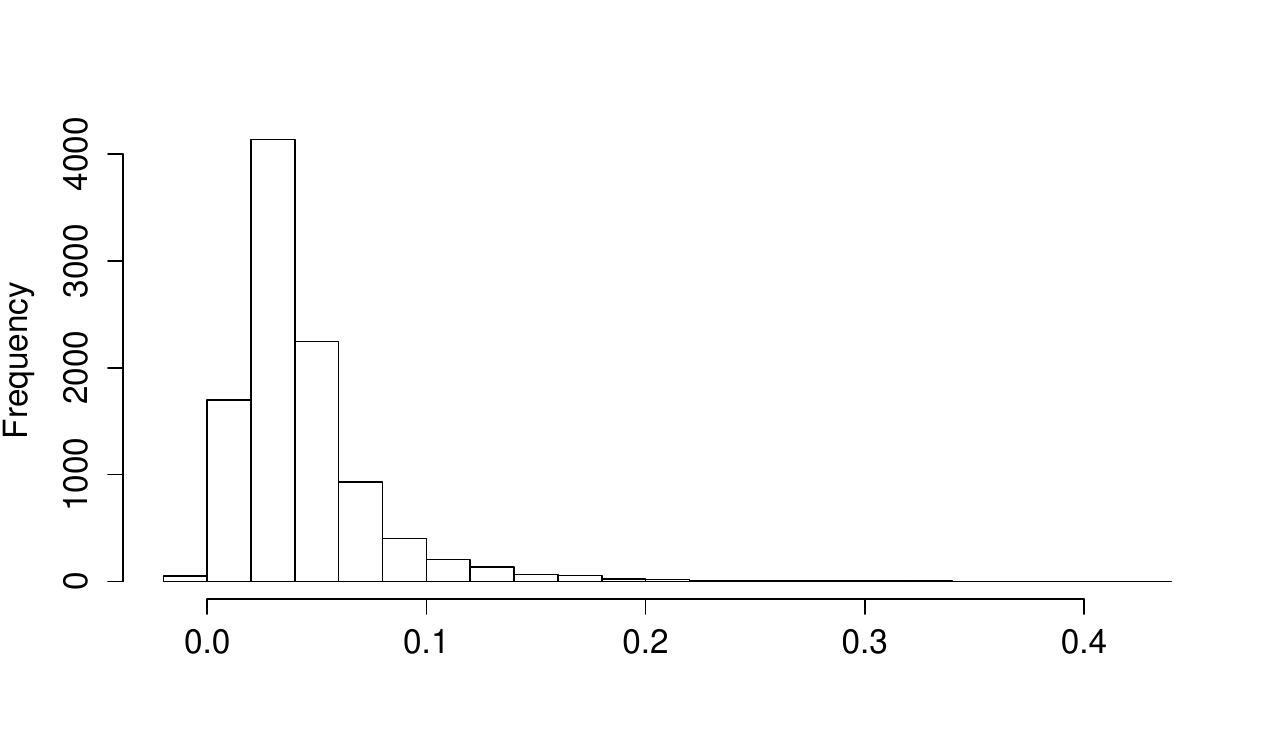}
\subcaption{sample autocorrelation}
\end{subfigure}
 \begin{subfigure}[c]{0.49\textwidth}
\includegraphics[width=7cm, height=5cm]{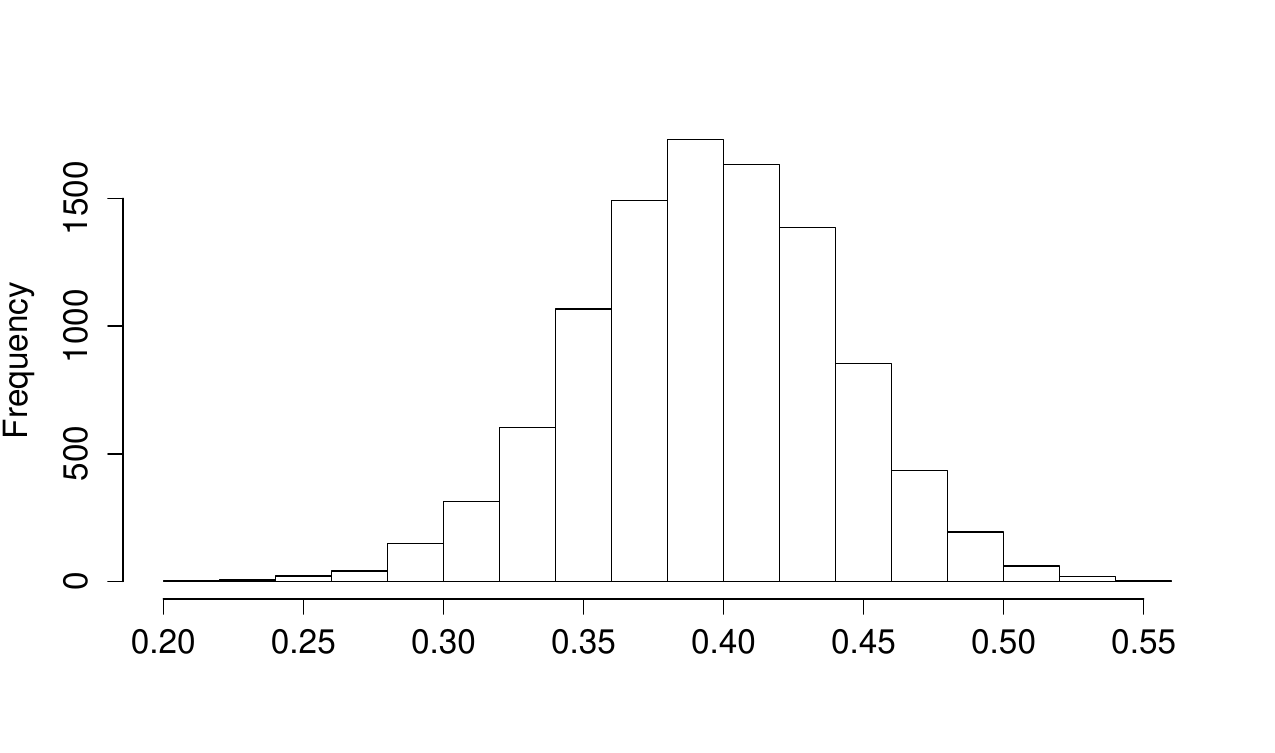}
\subcaption{MG-estimator}
\end{subfigure}
\caption{Histogram of $\hat{\rho}(1)$ and $\hat{\rho}_Q(1)$ based on 10,000 replications. $X_i$ is generated by an AR(1) process with outliers, $\epsilon_i\sim N(0,1)$, $\rho=0.4$ and $n=500$.}\label{histogram.rho}
\end{figure}

\textbf{Simulation results}\par \bigskip

Table \ref{Table_empirical_size_both_without} reports the empirical size at the $5\%$ significance level based on 10,000 replications of $\tilde{M}_{C,n}$ and $\hat{M}_n$ tests, for the model (\ref{simu_model_srd}) without outliers. The empirical size of $\hat{M}_n$ and $\tilde{M}_{C,n}$ slightly exceed the $5\%$ level for large sample size $n$ for $\theta=0.5$ and $\Delta=0.5,1,2$. The size of the tests is more distorted if the change-point is located close to the beginning or end of the sample, i.e. for $\theta=0.25,0.75$. We also consider the situation of no change, i.e. $\Delta=0$, for which the empirical size of both testing procedures is close to the nominal size. Empirical sizes of $\hat{M}_n$ and $\tilde{M}_{C,n}$ are comparable in the absence of outliers. \par 
Note that in Table \ref{Table_empirical_size_both_without} both tests do not tend to $5\%$ as it is expected. This is due to a very slow convergence to the limit process. In simulation studies with really large sample size $n>10,000$ the empirical size of both tests is tending to $5\%$. Since $\tilde{M}_{C,n}$ and $\hat{M}_n$ are both suffering from this slow convergence, they are still comparable to each other.
\par
Table \ref{Table_empirical_size_both_with} reports the empirical size of $\hat{M}_n$ and $\tilde{M}_{C,n}$ in presence of outliers and $t_1$-distributed innovations. While test $\hat{M}_n$ is robust to the outliers and just slightly affected by the heavy-tailed innovations, the test $\tilde{M}_{C,n}$ becomes much too conservative.
 
\begin{table}
\begin{tabular}{c||cc|cc|cc||cc}
$\theta=$ & \multicolumn{2}{|c|}{0.25} & \multicolumn{2}{|c|}{0.5} & \multicolumn{2}{|c||}{0.75} & \multicolumn{2}{|c}{0.5} \\\hline
 & $\tilde{M}_{C,n}$ & $\hat{M}_n$ & $\tilde{M}_{C,n}$ & $\hat{M}_n$ & $\tilde{M}_{C,n}$ & $\hat{M}_n$ & $\tilde{M}_{C,n}$ & $\hat{M}_n$\\\hline\hline
n= &  \multicolumn{6}{l||}{$\Delta = 1$} &  \multicolumn{2}{l}{$\Delta = 0$}\\\hline
$200$   &3.79 &3.52 &3.90 &3.41 &4.46 &3.92 &3.48 &2.78 \\
$500$   &8.35 &7.71 &5.12 &4.28 &8.47 &8.10 &4.36 &3.89 \\
$1000$  &9.83 &9.44 &5.11 &4.68 &10.10 &9.49 &4.61 &4.11  \\
$2000$  &9.45 &9.37 &5.96 &5.23 &9.87&9.76 &5.10 &4.64  \\
$5000$  &8.28 &7.77 &6.26 &5.59 &8.51 &8.01 &5.18 &4.91  \\\hline\hline
n= &  \multicolumn{6}{l||}{$\Delta = 2$} &  \multicolumn{2}{l}{$\Delta = 0.5$}\\\hline
$200$   &5.08 &4.68 &4.18 &3.69 &5.85 &5.12 &3.63 &3.03 \\
$500$   &7.32 &8.03 &5.49 &4.67 &7.07 &7.43 &4.54 &4.10 \\
$1000$  &7.67 &8.05 &5.38 &4.79 &7.15 &7.38 &4.82 &4.46  \\
$2000$  &7.11 &7.16 &6.03 &5.31 &6.88 &7.15 &5.57 &4.90  \\
$5000$  &6.30 &6.12 &6.15 &5.58 &6.45 &6.29 &6.01 &5.46  \\\hline\hline
\end{tabular}
\caption{Empirical size of $\tilde{M}_{C,n}$ and $\hat{M}_n$ tests at the $5\%$ significance level, 10,000 replications. $X_i$ follows the model (\ref{simu_model_srd}) without outliers and $\epsilon_i\sim N(0,1)$.}\label{Table_empirical_size_both_without}
\end{table}

\begin{table} 
\begin{tabular}{c|cc|cc|cl}
 & \multicolumn{2}{|c}{$\epsilon_i\sim N(0,1)$} & \multicolumn{2}{|c}{$\epsilon_i\sim t_1$}  &\multicolumn{2}{|c}{$\epsilon_i\sim N(0,1)$ with	outliers}  \\\hline
 n=  & \multicolumn{1}{|c}{$\tilde{M}_{C,n}$} & \multicolumn{1}{c}{$\hat{M}_n$} & \multicolumn{1}{|c}{$\tilde{M}_{C,n}$}  & \multicolumn{1}{c}{$\hat{M}_n$} & \multicolumn{1}{|c}{$\tilde{M}_{C,n}$} & \multicolumn{1}{l}{$\hat{M}_n$}\\\hline\hline
$1000$ 	&5.11 &4.68  &0.83 &2.92  &0.56 &4.82\\\hline
$2000$	&5.96 &5.23  &1.22 &3.74  &1.17 &5.56\\\hline
$5000$  &6.26  &5.59   &1.03 &4.57  &2.28 &5.41\\\hline
\end{tabular}	
\setcapindent{0pt}\caption{Empirical size of $\tilde{M}_{C,n}$ and $\hat{M}_n$ tests at the $5\%$ significance level, 10,000 replications.  $X_i$ follows the model (\ref{simu_model_srd}) with $\epsilon_i\sim N(0,1)$ without and with outliers, and $\epsilon_i\sim t_1$. We consider $\Delta=1$ and $\theta=0.5$.}\label{Table_empirical_size_both_with}
\end{table}

Tables \ref{Table_empirical_power} and  \ref{Table_empirical_power_outliers} report the empirical power of test $\tilde{M}_{C,n}$ and $\hat{M}_n$, for $X_i$ in (\ref{simu_model_lrd}) without outliers and with outliers, respectively. Table \ref{Table_empirical_power} shows that the power of both tests increases with increasing sample size and dependence parameter $d$ (except power of $\hat{M}_n$ for $n=200$, $d=0.4$). It shows that in absence of outliers $\hat{M}_n$ and $\tilde{M}_{C,n}$ have similar power properties.

\begin{table}
\begin{tabular}{c|cc|cc|cc|cc}
d =  & \multicolumn{2}{|c|}{0.1} & \multicolumn{2}{|c|}{0.2} & \multicolumn{2}{|c|}{0.3} & \multicolumn{2}{|c}{0.4} \\\hline
n=				& $\tilde{M}_{C,n}$	& $\hat{M}_n$ & $\tilde{M}_{C,n}$	& $\hat{M}_n$ & $\tilde{M}_{C,n}$	& $\hat{M}_n$ & $\tilde{M}_{C,n}$	& $\hat{M}_n$ \\\hline\hline
200		  		&7.68	&5.90 &12.28 &9.99 &14.11 &11.50 &12.53 &9.35\\\hline     
500				&14.12 &11.53 &25.31 &22.84 &31.52 &28.33 &32.03 &28.42\\\hline
1000			&20.22 &16.95 &35.37 &32.64	&46.41 &43.11 &50.22 &46.06 \\\hline
2000			&26.67 &23.90 &49.17 &45.95 &61.92 &58.68 &67.50 &63.52\\\hline
5000			&35.05 &32.68 &64.44 &61.27 & 79.67 &77.48 &85.12 &82.63\\\hline
\end{tabular}
\caption{Empirical power  of $\tilde{M}_{C,n}$ and $\hat{M}_n$ tests at the $5\%$ significance level, 10,000 replications. $X_i$ follows the model (\ref{simu_model_lrd}) without outliers.}\label{Table_empirical_power}
\end{table} 

Table \ref{Table_empirical_power_outliers} shows that the empirical size of $\hat{M}_n$ is practically not affected by the outliers, whereas $\tilde{M}_{C,n}$ suffers a loss of power.\par

Let us have a closer look on what happens in the case of outliers. There are different steps in the testing procedures that might be affected by outliers: the estimation of the time of change, the estimation of the long-run variance and the test statistic itself. The impact of outliers on a CUSUM and Wilcoxon based change-point estimator has already been discussed in \citet{Gerstenberger.2016}. It is shown that the Wilcoxon-type estimator is nearly not affected by outliers whereas the CUSUM-type estimator has trouble in detecting the correct time of change. Therefore, if this would be the only problem in the CUSUM-type testing procedure, we should expect $\tilde{M}_{C,n}$ to reject the hypothesis more often due to splitting the data at the spuriously estimated change-point. But as we have seen in Table \ref{Table_empirical_size_both_with} this is not the case. Let us now have a closer look at the CUSUM statistic $C_{1,n}(k)$ and the Wilcoxon statistic $W_{1,n}(k)$. We generated a series of random variables $Y_1, \ldots, Y_{n}$, $n=1000$ following the AR(1) process given in (\ref{simu_model_srd}), but without a change in mean. In Figure \ref{test_without_LRV} the solid line shows in (a) $n^{-1/2}|C_{1,n}(k)|$, $k=1,\ldots,1000$ and in (b) $n^{-3/2}|W_{1,n}(k)|$, $k=1,\ldots,1000$, both applied to $Y_1, \ldots, Y_{1000}$. Then we disturbed the same variables $Y_1,\ldots, Y_n$ with outliers as described above. The dashed lines in both figures show the results for $n^{-1/2}|C_{1,n}(k)|$ and $n^{-3/2}|W_{1,n}(k)|$ applied to the variables including outliers. We see again that the Wilcoxon statistic is not affected by the outliers. But as expected, the CUSUM statistic has larger values in the outlier scenario and therefore it has a larger maximum. But again, this should lead to a more often rejection of the hypothesis. So why do the simulation results show more conservatism for the CUSUM-type testing procedure in the outlier scenario? This is due to the long-run variance estimation. If we have a look at the value for the estimator given in (\ref{Carlstein_estimator}) applied to the example we see that the value for the data with outliers ($\hat{\sigma}^2_C = 4.63$) is much higher than the value for data without outliers ($\hat{\sigma}^2_C = 2.04$). This reduces the values for the CUSUM-testing procedure for outlier scenario, since we divide by the estimate of the long-run variance, see Figure \ref{test_with_LRV} (a). This leads to reduction of size and a loss in power. For the Wilcoxon-type testing procedure we can observe that the value of $\hat{\sigma}_W$ given in (\ref{estimator_lrv}) is in both cases nearly the same ($\hat{\sigma}_W=0.38$ with outliers and $\hat{\sigma}_W=0.41$ without), see Figure \ref{test_with_LRV} (b).
 \begin{figure}
  \begin{subfigure}[c]{0.49\textwidth}
\includegraphics[width=7cm, height=5cm]{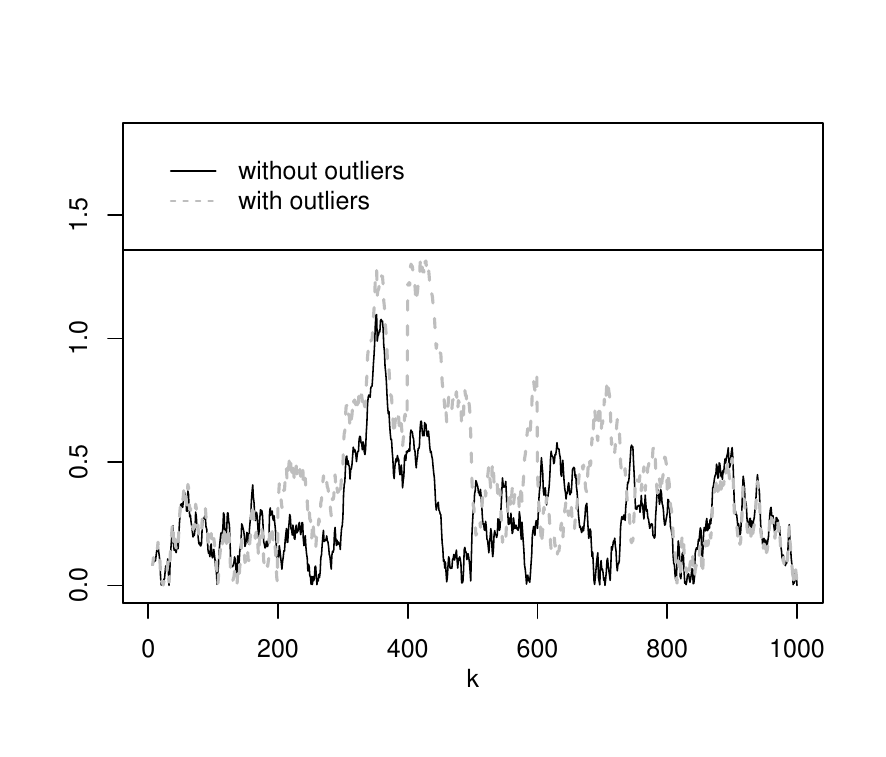}
\subcaption{$n^{-1/2}|C_{1,n}(k)|$}
\end{subfigure}
 \begin{subfigure}[c]{0.49\textwidth}
\includegraphics[width=7cm, height=5cm]{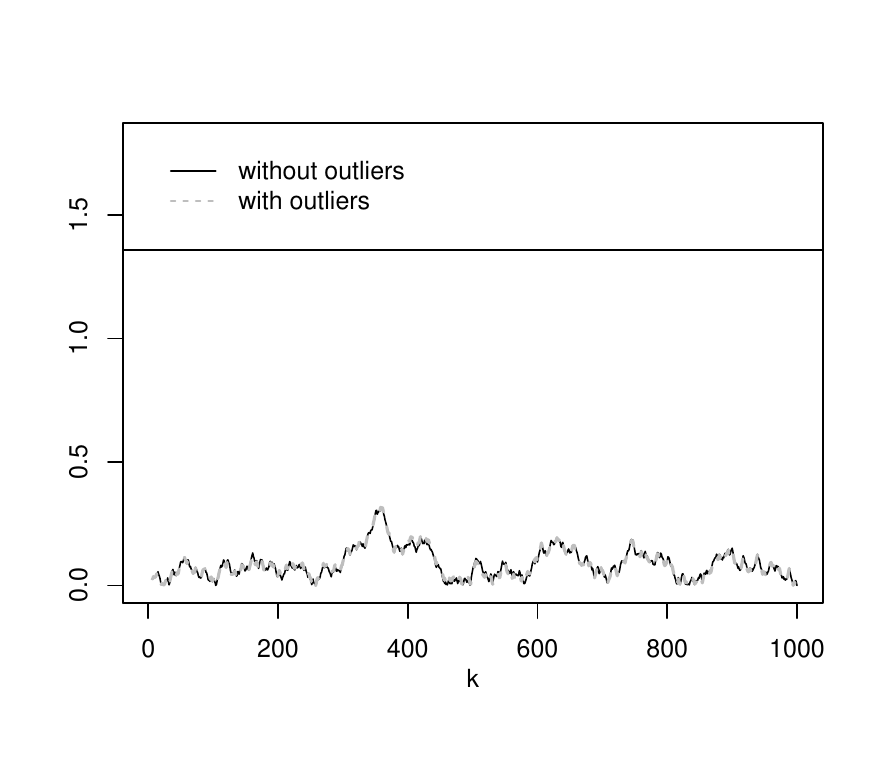}
\subcaption{$n^{-3/2}|W_{1,n}(k)|$}
\end{subfigure}
\caption{Values of $n^{-1/2}|C_{1,n}(k)|$ and $n^{-3/2}|W_{1,n}(k)|$ for $k=1,\ldots,1000$. $Y_i = \rho Y_{i-1} + \epsilon_i$ is an AR(1) process with $\rho=0.4$ and standard normal innovations $\epsilon_i$. For the dashed lines $(Y)$ is disturbed by outliers.}\label{test_without_LRV}
\end{figure}
 \begin{figure}
  \begin{subfigure}[c]{0.49\textwidth}
\includegraphics[width=7cm, height=5cm]{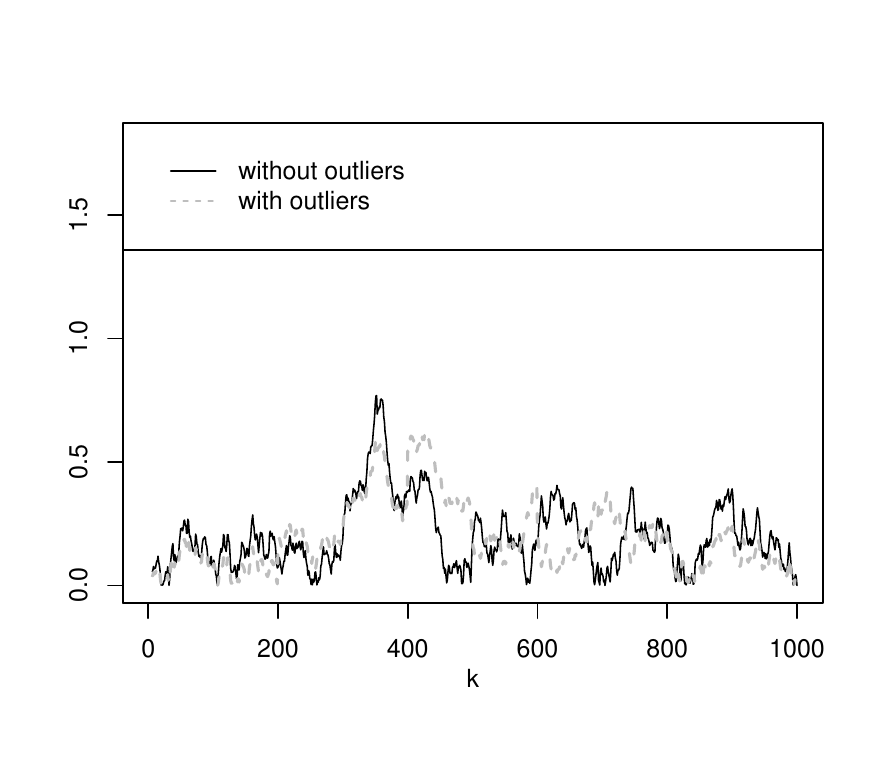}
\subcaption{$(\hat{\sigma}_C n^{1/2})^{-1}|C_{1,n}(k)|$}
\end{subfigure}
 \begin{subfigure}[c]{0.49\textwidth}
\includegraphics[width=7cm, height=5cm]{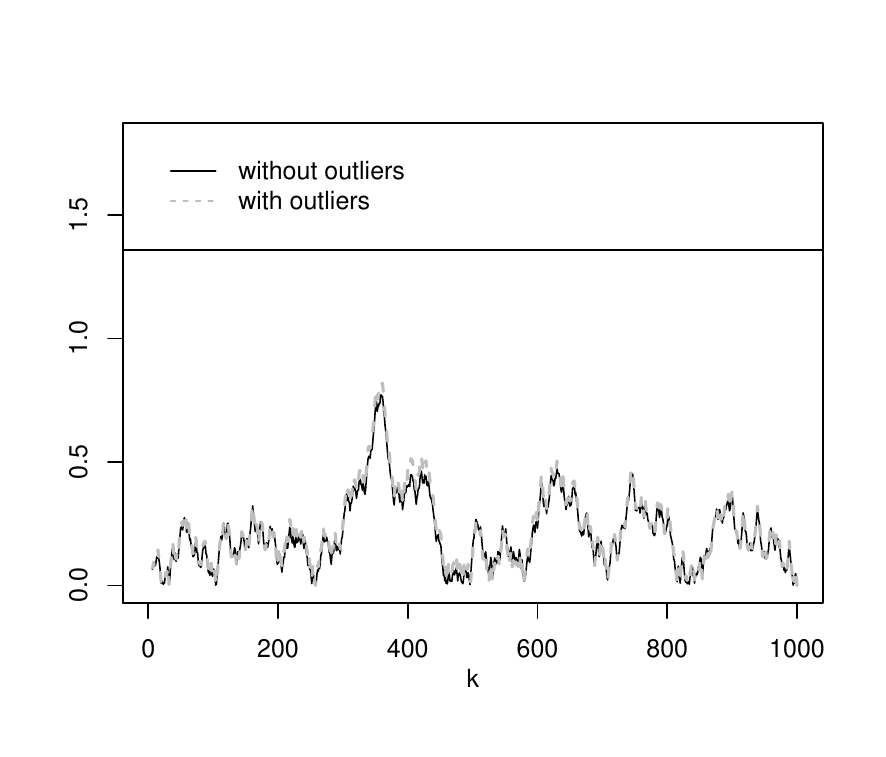}
\subcaption{$(\hat{\sigma}_W n^{3/2})^{-1}|W_{1,n}(k)|$}
\end{subfigure}
\caption{Values of $(\hat{\sigma}_C n^{1/2})^{-1}|C_{1,n}(k)|$ and $(\hat{\sigma}_W n^{3/2})^{-1}|W_{1,n}(k)|$ for $k=1,\ldots,1000$. $Y_i = \rho Y_{i-1} + \epsilon_i$ is an AR(1) process with $\rho=0.4$ and standard normal innovations $\epsilon_i$. For the dashed lines $(Y)$ is disturbed by outliers.}\label{test_with_LRV}
\end{figure}
\par 
\bigskip
In general, we conclude that Wilcoxon test $\hat{M}_n$ allows discrimination between long-range dependence and short-range dependence with a change in mean that is robust to outliers. In absence of outliers it performs equally well as CUSUM test $\tilde{M}_{C,n}$, but outperforms it in presence of outliers.

 \begin{table}
\begin{tabular}{c|cc|cc|cc|cc}
d =  & \multicolumn{2}{|c|}{0.1} & \multicolumn{2}{|c|}{0.2} & \multicolumn{2}{|c|}{0.3} & \multicolumn{2}{|c}{0.4} \\\hline
n=				& $\tilde{M}_{C,n}$	& $\hat{M}_n$ & $\tilde{M}_{C,n}$	& $\hat{M}_n$ & $\tilde{M}_{C,n}$	& $\hat{M}_n$ & $\tilde{M}_{C,n}$	& $\hat{M}_n$ \\\hline\hline
200		  		& 1.63  &6.06 &2.53 &10.06 &2.65 &11.88 &3.62 &9.69\\\hline     
500				&2.76 &11.71&5.02 &22.95&7.26 &28.60&8.69 &28.37\\\hline
1000			&4.10 &17.13&10.40 &32.60 &16.91 &43.11&21.96 &46.18\\\hline
2000			&8.46 &23.88&23.07 &45.90& 37.05 &58.71&47.00&63.68\\\hline
5000			&18.76 &32.66&46.78 &61.55 &68.99 &77.54&78.65 &82.68\\\hline
\end{tabular} 
\caption{Empirical power of $\tilde{M}_{C,n}$ and $\hat{M}_n$ tests at the $5\%$ significance level, 10,000 replications. $X_i$ follows the model (\ref{simu_model_lrd}) with outliers.}\label{Table_empirical_power_outliers}
\end{table}


\section{Data Example}\label{Data Example}

In the following data example we consider a hydrologic time series. In particular, we consider the mean daily discharges (MQ) of the river Elbe in Dresden, Germany. The data cover the time from 01.01.1844 to 31.12.1849 ($n=2191$) and are shown in figure \ref{MQ_example} (a). It is well known that daily MQ are strongly correlated, see figure \ref{acf_MQ} for the sample autocorrelation function. Hence, testing for dependency should result in long-range dependence. In the year 1845 there was a big flood in Dresden, which appears in figure \ref{MQ_example} (a) as an outlier. The time series also contains some smaller outliers after 1845. \par 
We calculated the CUSUM testing procedure $\tilde{M}_{C,n}$ and the Wilcoxon testing procedure $\hat{M}_{n}$ for each time point $k=1,\ldots,2191$. That means we divide the sample at the estimated time of change $\hat{k}$ and consider $(\hat{\sigma}_C \hat{k}^{1/2})^{-1}|C_{1,\hat{k}}(k)|$ for $k=1,\ldots,\hat{k}$ and $(\hat{\sigma}_C (n-\hat{k})^{1/2})^{-1}|C_{\hat{k}+1,n}(k)|$ for $k = \hat{k}+1,\ldots,n$ for the CUSUM test and $(\hat{\sigma}_W \hat{k}^{3/2})^{-1}|W_{1,\hat{k}}(k)|$ and $(\hat{\sigma}_W (n-\hat{k})^{3/2})^{-1}|W_{\hat{k}+1,n}(k)|$, respectively, for the Wilcoxon test. The results are shown in figure \ref{MQ_example} (b). The vertical line in the plot refers to the critical value $c_{5\%}=1.478$. \par
Although the data seem to be long-range dependent both testing procedures have a maximum value less than the critical value, where the CUSUM test has a much smaller value $\tilde{M}_{C,n} = 0.89$ than the Wilcoxon test $\hat{M}_{n} = 1.30$. This seems to be in line with the conclusion of the simulation section that the CUSUM test loses power due to the affect of outliers on the long-run variance estimation. Even though the Wicoxon test would also not reject, the value is close to the critical value.

 \begin{figure}
  \begin{subfigure}[c]{0.49\textwidth}
\includegraphics[width=7cm, height=5.4cm]{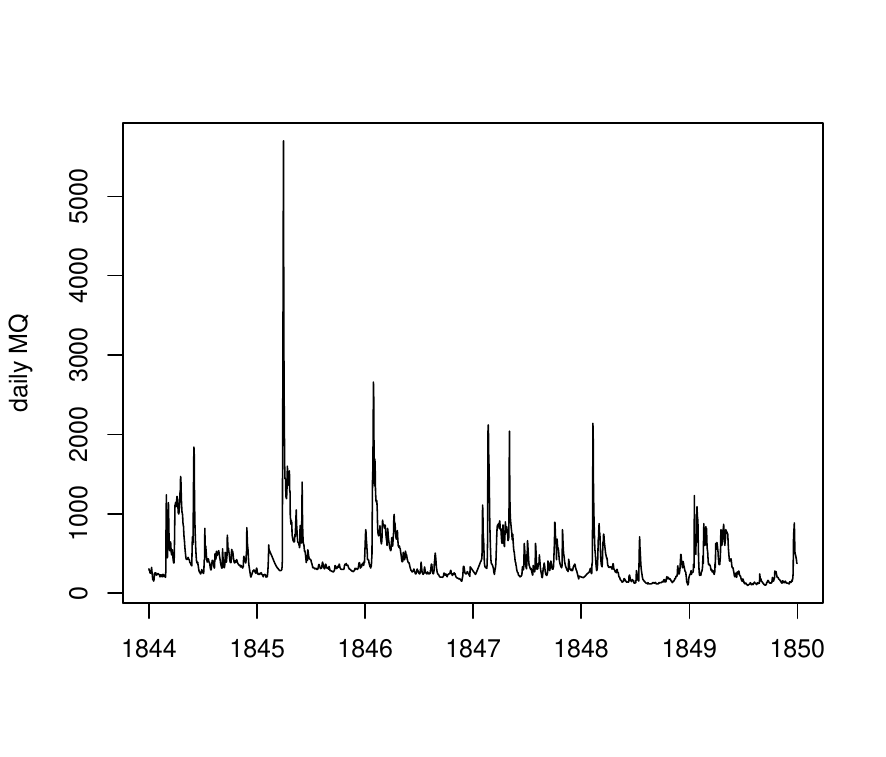}
\subcaption{Daily MQ of the river Elbe in Dresden, Germany.}
\end{subfigure}
 \begin{subfigure}[c]{0.49\textwidth}
\includegraphics[width=7cm, height=5.4cm]{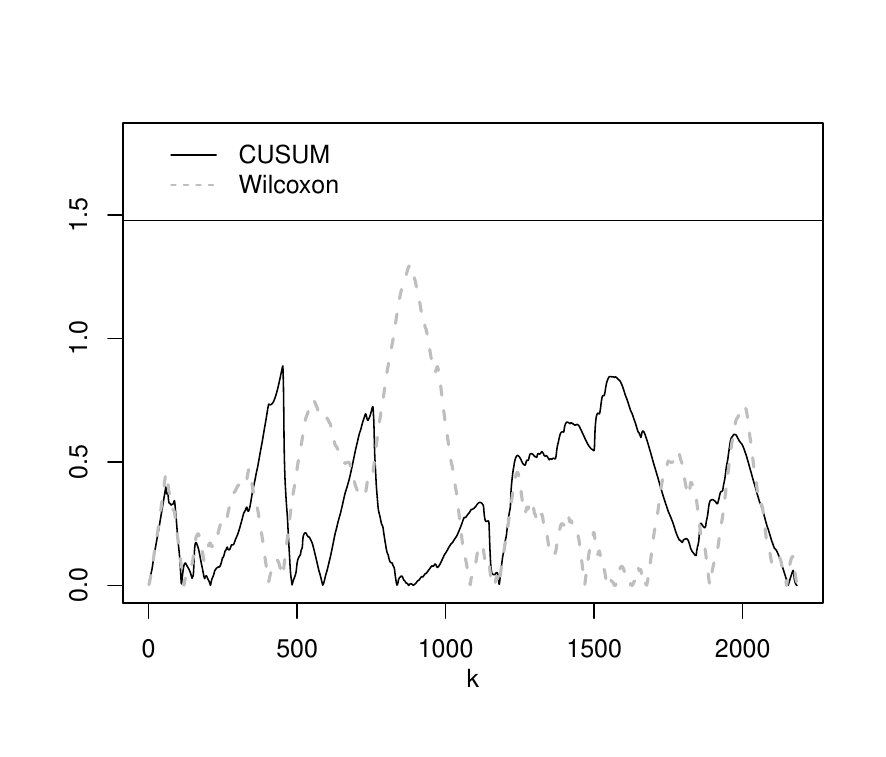}
\subcaption{Values for CUSUM and Wilcoxon testing procedures.}
\end{subfigure}
\caption{Mean daily discharge (MQ) of the river Elbe in Dresden, Germany, from 1844 to 1849 (a). In (b) we see the corresponding pointwise values for the CUSUM and Wilxocon type testing procedure.}\label{MQ_example}
\end{figure}

 \begin{figure}
\includegraphics[width=7cm, height=5cm]{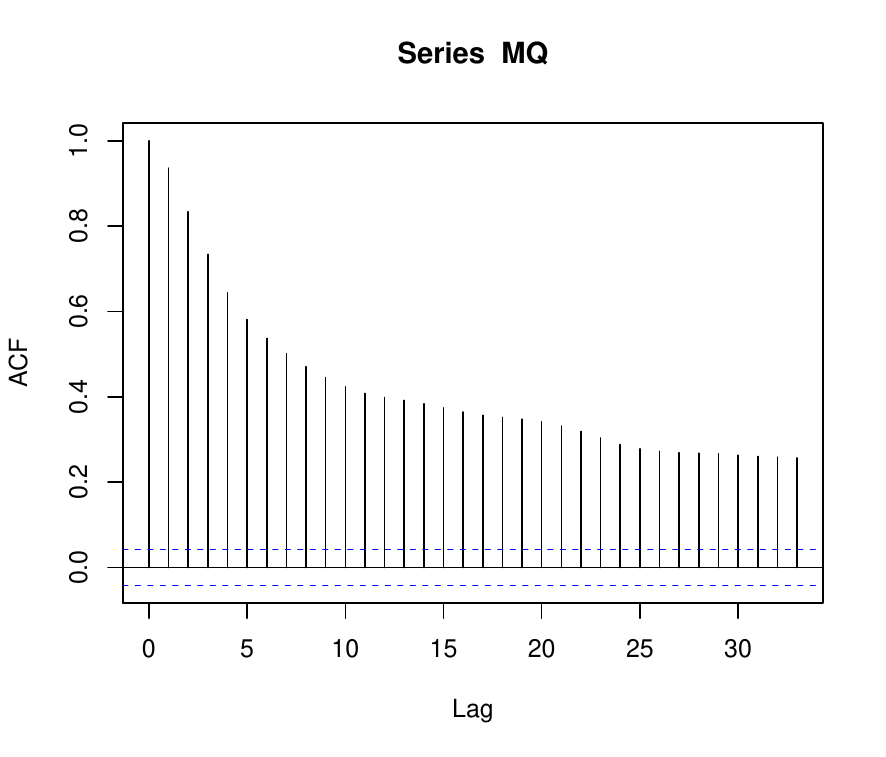}
\caption{Sample autocorrelation function of the daily MQ of the river Elbe in Dresden.}\label{acf_MQ}
\end{figure}

\section{Proofs}\label{Proofs}

This section contains the proofs of \thref{theorem_hypothesis}, \thref{theorem_alternative} and auxiliary lemmas.

\subsection{Proof of \thref{theorem_hypothesis}}\label{Proof_hypothesis}

Suppose that $X_1,\ldots,X_n$ follow the model in (\ref{model_srd}) and Assumptions \ref{H1} and \ref{H2} are satisfied. Throughout the proofs without loss of generality, we assume $\mu = 0$ and $\Delta_n > 0$.\\

Before we can state the proof of \thref{theorem_hypothesis}, we need to consider the following lemmata, which proofs can be found in sections \ref{sec_lemma_1} and \ref{sec_lemma_2}, respectively.

\begin{lemma}\thlabel{lemma1}
Let $X_1,\ldots,X_n$ follow the model in (\ref{model_srd}), and Assumptions \ref{H1} and \ref{H2} be satisfied. Let $\hat{k}$ be defined as in (\ref{bp_estimator_wilcoxon}). Then,
\begin{align*}
n^{-3/2}\max_{1\leq k \leq \hat{k}} \big|W_{1,\hat{k}}(k) \big| 
&= n^{-3/2}\max_{1\leq k \leq \hat{k}} \big| \sum_{i=1}^{k}\sum_{j=k+1}^{\hat{k}}(1_{\left\{Y_i\leq Y_j\right\}}-1/2) \big| + o_P\left(1\right)\\
n^{-3/2}\max_{\hat{k}< k \leq n} \big| W_{\hat{k}+1,n}(k) \big|
&= n^{-3/2}\max_{\hat{k}< k \leq n} \big| \sum_{i=\hat{k}+1}^{k}\sum_{j=k+1}^{n}(1_{\left\{Y_i\leq Y_j\right\}}-1/2) \big| + o_P\left(1\right).
\end{align*}
\end{lemma}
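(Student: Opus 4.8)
The plan is to prove (\ref{lemma1_1}) and (\ref{lemma1_2}) by the same mechanism, doing the first subsample in detail and the second by symmetry. Since $\big|\max_k|W|-\max_k|U|\big|\le \max_k|W-U|$, it suffices to show $n^{-3/2}\max_{1\le k\le \hat k}\big|W_{1,\hat k}(k)-U_{1,\hat k}(k)\big|=o_P(1)$. First I would identify exactly which summands differ. With $\mu=0$ and $\Delta_n>0$, in the subsample $X_1,\ldots,X_{\hat k}$ one has $X_i=Y_i$ for $i\le k^*$ and $X_i=Y_i+\Delta_n$ for $k^*<i\le\hat k$. For a pair $i\le k<j\le\hat k$ the indicator difference $1_{\{X_i\le X_j\}}-1_{\{Y_i\le Y_j\}}$ vanishes whenever $i$ and $j$ lie on the same side of $k^*$, since then the common shift cancels in the comparison; hence only pairs \emph{straddling} $k^*$ contribute, i.e. $i\le k^*<j$. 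For such a pair the difference equals $1_{\{Y_j<Y_i\le Y_j+\Delta_n\}}\ge 0$. If $\hat k\le k^*$ there is no contamination at all and $W_{1,\hat k}=U_{1,\hat k}$ identically.

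The key step is to dominate the maximum over $k$ by a single nonnegative count. For every $k$ the set of contributing pairs is contained in the rectangle $\{(i,j):1\le i\le k^*,\ k^*<j\le\hat k\}$, and all contributions are nonnegative, so
\[
\max_{1\le k\le\hat k}\big|W_{1,\hat k}(k)-U_{1,\hat k}(k)\big|\ \le\ D_n:=\sum_{i=1}^{k^*}\sum_{j=k^*+1}^{\hat k}1_{\{0<Y_i-Y_j\le\Delta_n\}}.
\]
It then remains to show $n^{-3/2}D_n=o_P(1)$. Because $\hat k$ is random and data-dependent, I would not take expectations of $D_n$ directly; instead I use the consistency rate $\Delta_n^2|\hat k-k^*|=O_P(1)$ from (\ref{rate_consistency_bp_estimator}). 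Fixing $\varepsilon>0$, choose $M$ with $\p\big(\hat k\le k^*+M\Delta_n^{-2}\big)\ge 1-\varepsilon$ for large $n$; on this event $D_n$ is dominated by the truncated sum $\tilde D_n$ obtained by replacing the upper limit $\hat k$ with the deterministic $k^*+\lceil M\Delta_n^{-2}\rceil$.

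For the truncated sum the distributional assumption (\ref{Bed_gem_Verteilung}) together with stationarity gives $\p(0<Y_i-Y_j\le\Delta_n)\le C\Delta_n$ uniformly in $i<j$, whence $\E\tilde D_n\le k^*\cdot\lceil M\Delta_n^{-2}\rceil\cdot C\Delta_n=O\big(n\Delta_n^{-1}\big)$. By Markov's inequality $n^{-3/2}\tilde D_n=O_P\big(n^{-1/2}\Delta_n^{-1}\big)=O_P\big((n\Delta_n^2)^{-1/2}\big)$, which tends to $0$ by Assumption \ref{H2}b). A standard two-event split $\p(n^{-3/2}D_n>\delta)\le \p(\hat k> k^*+M\Delta_n^{-2})+\p(n^{-3/2}\tilde D_n>\delta)$ then gives $\limsup_n\p(n^{-3/2}D_n>\delta)\le\varepsilon$ for every $\delta>0$, and letting $\varepsilon\downarrow 0$ yields (\ref{lemma1_1}). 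For (\ref{lemma1_2}) the two regimes are interchanged: when $\hat k\ge k^*$ the whole subsample $X_{\hat k+1},\ldots,X_n$ is shifted by the same constant and $U$ is shift-invariant, so $W=U$ exactly; when $\hat k<k^*$ the contamination consists of the $k^*-\hat k=O_P(\Delta_n^{-2})$ indices $\hat k<i\le k^*$, and the identical counting and moment bound apply. The main obstacle is the combinatorial reduction of the supremum over $k$ to the rectangle count $D_n$ and the careful coupling of the random summation range $\hat k$ with the $O_P(\Delta_n^{-2})$ rate; once these are in place, the moment bound and the scaling $n\Delta_n^2\to\infty$ close the argument.
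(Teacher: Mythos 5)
Your proposal is correct, and its combinatorial reduction coincides exactly with the paper's: the same case split $\hat k\le k^*$ versus $\hat k>k^*$, the same observation that the common shift cancels for non-straddling pairs, and the same domination of $\max_{1\le k\le\hat k}|W_{1,\hat k}(k)-U_{1,\hat k}(k)|$ by the single rectangle count $D_n=\sum_{i=1}^{k^*}\sum_{j=k^*+1}^{\hat k}1_{\{Y_j<Y_i\le Y_j+\Delta_n\}}$, which is the paper's $I_{1,\hat k}(k^*)=\tilde U_{1,\hat k}(k^*)$ (and the mirrored count for (\ref{lemma1_2})). Where you genuinely diverge is in proving $n^{-3/2}D_n=o_P(1)$. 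The paper routes this through \thref{Lemma_abschaetzung_bp_kdach}: a Hoeffding decomposition of the kernel $h_n(x,y)=1_{\{y<x\le y+\Delta_n\}}$, with the degenerate part controlled uniformly in both summation endpoints via the maximal inequality of \thref{g_negligible2} (which requires verifying $1$-continuity of $g_n$ and the summability condition on $\beta_k,a_k$), and the linear parts bounded deterministically by $C\Delta_n$; the uniformity over $(k,l)$ is what absorbs the randomness of $\hat k$. You instead absorb that randomness by a truncation argument: tightness from (\ref{rate_consistency_bp_estimator}) lets you replace the random endpoint $\hat k$ by the deterministic majorant $k^*+\lceil M\Delta_n^{-2}\rceil$ on an event of probability at least $1-\varepsilon$, and since the summands are nonnegative indicators, a first-moment bound using only the pairwise estimate $\p(0<Y_i-Y_j\le\Delta_n)\le C\Delta_n$ from stationarity and (\ref{Bed_gem_Verteilung}) plus Markov gives $n^{-3/2}\tilde D_n=O_P\bigl((n\Delta_n^2)^{-1/2}\bigr)=o_P(1)$, matching the paper's rate. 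Your route is more elementary — it needs no $1$-continuity, no mixing-rate condition beyond what feeds (\ref{rate_consistency_bp_estimator}), and no maximal inequality, because expectations of indicator sums are insensitive to dependence. What the paper's heavier machinery buys is a sharper, two-sided statement: $\tilde U_{1,\hat k}(k^*)=k^*(\hat k-k^*)\Theta_{\Delta_n}+o_P(n^{3/2})$, an exact first-order approximation rather than an upper bound, with the degenerate-kernel lemma \thref{g_negligible2} reused in the proof of \thref{lemma2}; but for establishing (\ref{lemma1_1}) and (\ref{lemma1_2}) themselves, only the upper bound is used, so your argument closes the lemma completely.
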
\bigskip

\begin{lemma}\thlabel{lemma2}
Let $\left(Y_j\right)$ satisfy Assumption \ref{H1} and let Assumption \ref{H2} hold. Then,
\begin{equation}\label{eq_lemma2}
\Big(T(Y_1,\ldots,Y_{\hat{k}}),T(Y_{\hat{k}+1},\ldots,Y_{n})\Big)\xrightarrow{d} \Big( \sigma \sup_{0\leq t \leq 1}\big| B^{(1)}\left(t\right) \big|, \sigma \sup_{0\leq t \leq 1}\big| B^{(2)}\left(t\right) \big|\Big),
\end{equation}
where $B^{(1)}$ and $B^{(2)}$ are independent Brownian bridges, and $\sigma$ is given in (\ref{lrv_sigma2}).
\end{lemma}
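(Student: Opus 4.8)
The plan is to linearize the two-sample Wilcoxon kernel by a Hoeffding-type projection, thereby reducing $\max_k\bigl|U_{1,\hat k}(k)\bigr|$ to a CUSUM functional of the transformed variables $F(Y_i)-\tfrac12$, and then to invoke an invariance principle together with the continuous mapping theorem, using the consistency of $\hat k$ to dispose of the randomness of the split point.

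First I would set $g(y)=F(y)-\tfrac12$ and $S_k=\sum_{i=1}^{k} g(Y_i)$, and record that $\E g(Y_1)=\int F\,dF-\tfrac12=0$. Writing $h(x,y)=1_{\{x\le y\}}-\tfrac12$ and using $\E[h(x,Y')]=\tfrac12-F(x)=-g(x)$, $\E[h(Y',y)]=F(y)-\tfrac12=g(y)$ for an independent copy $Y'$, the Hoeffding decomposition reads $h(x,y)=-g(x)+g(y)+r(x,y)$ with a degenerate kernel $r$. Summing over $1\le i\le k<j\le m$ collapses the two linear parts into a single CUSUM term,
\[
U_{1,m}(k)=-m\Big(S_k-\tfrac{k}{m}S_m\Big)+R_{1,m}(k),\qquad R_{1,m}(k)=\sum_{i=1}^{k}\sum_{j=k+1}^{m} r(Y_i,Y_j).
\]
The core analytic step is to show the degenerate remainder is uniformly negligible, $m^{-3/2}\max_{1\le k\le m}|R_{1,m}(k)|=o_P(1)$, and, since here $m=\hat k$ or $m=n-\hat k$ is random, that this holds for the two subsamples determined by $\hat k$.

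Next I would use the invariance principle for the partial sums of $g(Y_i)$. As $g$ is bounded and $(Y_j)$ is $L_1$ NED on an absolutely regular process under the summability in Assumption \ref{H1}, the sequential process $\hat W_n(t)=n^{-1/2}S_{[nt]}$ satisfies $\hat W_n\xrightarrow{d}\sigma W$ in $D[0,1]$, where $\sigma^2=\sum_k\Cov(F(Y_0),F(Y_k))$ as in (\ref{lrv_sigma2}) (centering by $\tfrac12$ leaves covariances unchanged). Evaluating the linearized statistic on $Y_1,\dots,Y_{\hat k}$ gives, with $\hat\theta=\hat k/n$,
\[
\hat k^{-3/2}\max_{1\le k\le\hat k}\big|U_{1,\hat k}(k)\big|=\sup_{0\le t\le 1}\Big|\tfrac{1}{\sqrt{\hat\theta}}\big(\hat W_n(\hat\theta t)-t\,\hat W_n(\hat\theta)\big)\Big|+o_P(1),
\]
and analogously the statistic on $Y_{\hat k+1},\dots,Y_n$ is governed by the increments of $\hat W_n$ over $[\hat\theta,1]$. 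By (\ref{rate_consistency_bp_estimator}) and Assumption \ref{H2}, $\hat\theta\to_p\theta\in(0,1)$, so jointly $(\hat W_n,\hat\theta)\xrightarrow{d}(\sigma W,\theta)$; since the map $(\omega,a)\mapsto\sup_t|a^{-1/2}(\omega(at)-t\omega(a))|$ is continuous at every $(\omega,\theta)$ with $\omega$ continuous and $\theta>0$, the extended continuous mapping theorem applies.

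Finally I would identify the limits. The self-similarity relation $\tilde W^{(1)}(t)=\theta^{-1/2}W(\theta t)$ defines a standard Brownian motion, whence $\theta^{-1/2}(W(\theta t)-tW(\theta))=\tilde W^{(1)}(t)-t\tilde W^{(1)}(1)=B^{(1)}(t)$ is a standard bridge, and likewise the second subsample produces a bridge $B^{(2)}$ built from $\{W(\theta+s)-W(\theta):0\le s\le 1-\theta\}$. Independence of $B^{(1)}$ and $B^{(2)}$ is inherited from the independence of the increments of $W$ before and after time $\theta$. Taking absolute values and suprema and applying the continuous mapping theorem once more yields (\ref{eq_lemma2}). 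I expect the main obstacle to be the uniform control of the degenerate remainder $R_{1,m}(k)$: one must bound the variance of $\sum_{i\le k}\sum_{j>k} r(Y_i,Y_j)$ and its increments in $k$ by $o(m^3)$ uniformly, and the dependence makes this delicate — it is precisely here that the conditions $\sum_k k^2(\beta_k+\sqrt{a_k})<\infty$ and the joint-law bound (\ref{Bed_gem_Verteilung}) enter, to control mixed fourth-order moments of the kernel and to apply a maximal inequality.
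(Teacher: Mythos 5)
Your proposal is correct and follows essentially the same route as the paper's proof: the identical Hoeffding decomposition reducing $U_{1,m}(k)$ to a CUSUM of $F(Y_i)-\tfrac12$ plus a degenerate remainder, negligibility of that remainder via second-moment bounds on rectangle sums and a maximal inequality under the conditions $\sum_k k^2(\beta_k+\sqrt{a_k})<\infty$ and (\ref{Bed_gem_Verteilung}) (the paper's \thref{g_negligible2}), the invariance principle for $n^{-1/2}\sum_{i\le[nt]}(F(Y_i)-\tfrac12)$ with the long-run variance (\ref{lrv_sigma2}) (the paper's \thref{remark_h1}), and identification of the two independent bridges through self-similarity and independent increments of $W$ with $\hat k/n\rightarrow_p\theta$. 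The only (immaterial) difference is that you handle the random split point by joint convergence $(\hat W_n,\hat\theta)\xrightarrow{d}(\sigma W,\theta)$ and the extended continuous mapping theorem, whereas the paper invokes the Skorokhod--Wichura--Dudley representation and substitutes $\hat k/n$ directly into the coupled Brownian motions $W_n$ --- equivalent devices for the same step.
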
\bigskip

\begin{proof}[Proof of \thref{theorem_hypothesis}]
We divide the proof  into two steps, as in the proof of Theorem 2.1 in \citet{Berkes.2006}.\par
First, in \thref{lemma1} we show that with $\hat{k}$ as in (\ref{bp_estimator_wilcoxon}),
\[T_n(X_1,\ldots,X_{\hat{k}}) = T_n(Y_1,\ldots,Y_{\hat{k}})+o_P(1)\] and
\[T_n(X_{\hat{k}+1},\ldots,X_{n})=T_n(Y_{\hat{k}+1},\ldots,Y_{n})+o_P(1).\]
Subsequently, in \thref{lemma2} we prove that 
\[
\Big(T_n(Y_1,\ldots,Y_{\hat{k}}),T_n(Y_{\hat{k}+1},\ldots,Y_{n})\Big)\xrightarrow{d}  \sigma (Z^{(1)},Z^{(2)}),
\]
where $Z^{(i)}=\sup_{0\leq t \leq 1}|B^{(i)}(t)|$, $i=1,2$.
Then, the claim (\ref{claim_th}) of \thref{theorem_hypothesis} follows by the continuous mapping theorem. 
\end{proof}\par \bigskip

\subsubsection{Auxiliary results}\label{aux_res}

In this section we state auxiliary results needed to prove \thref{lemma1} and \thref{lemma2} in sections \ref{sec_lemma_1} and \ref{sec_lemma_2}, respectively.

\subsubsection*{Concept of $1$-continuity}

Before we state the auxiliary results, we recall the concept of \textit{$1$-continuity}, which was introduced by \citet{Borovkova.2001}.\par \bigskip

To study the asymptotic behaviour of the Wilcoxon test 
\[W_{1,n}(k)=\sum_{i=1}^{k}\sum_{j=k+1}^{n}(1_{\{X_i\leq X_j\}}-1/2)\]
we need to show that the function $h(x,y)=1_{\{x\leq y\}}$ is $1$-continuous. Then the variables $(h(Y_i,Y_j))$ retain some characteristics of the variables $(Y_i,Y_j)$.

\begin{definition}{(\citet{Borovkova.2001})}\thlabel{1-continuous}\par
We say that the kernel $h\left(x,y\right)$ is $1$-continuous with respect to a distribution of a stationary process $(Y_j)$ if there exists a function $\phi(\epsilon)$, $\epsilon\geq 0$ such that $\phi\left(\epsilon\right)\rightarrow 0$, $\epsilon\rightarrow 0$, and for all $\epsilon > 0$ and $k\geq 1$
\begin{align}
\E \left(\left| h\left(Y_1,Y_k\right)-h\left(Y_1',Y_k\right) \right| 1_{\left\{\left|Y_1-Y_1'\right|\leq \epsilon\right\}}\right) & \leq \phi\left(\epsilon\right),\label{1-cont_11}\\
\E \left(\left| h\left(Y_k,Y_1\right)-h\left(Y_k,Y_1'\right) \right| 1_{\left\{\left|Y_1-Y_1'\right|\leq \epsilon\right\}}\right) & \leq \phi\left(\epsilon\right),\nonumber
\end{align}
and
\begin{align}
\E \left(\left| h\left(Y_1,Y_2'\right)-h\left(Y_1',Y_2'\right) \right| 1_{\left\{\left|Y_1-Y_1'\right|\leq \epsilon\right\}}\right) & \leq \phi\left(\epsilon\right),\label{1-cont_22}\\
\E \left(\left| h\left(Y_2',Y_1\right)-h\left(Y_2',Y_1'\right) \right| 1_{\left\{\left|Y_1-Y_1'\right|\leq \epsilon\right\}}\right) & \leq \phi\left(\epsilon\right),\nonumber
\end{align}
where $Y_2'$ is an independent copy of $Y_1$ and $Y_1'$ is any random variable that has the same distribution as $Y_1$.
\end{definition}

For a univariate function $g(x)$, the $1$-continuity property is defined as follows.

\begin{definition}\par
The function $g\left(x\right)$ is $1$-continuous with respect to a distribution of a stationary process $(Y_j)$ if there exists a function $\phi(\epsilon)$, $\epsilon\geq 0$ such that $\phi\left(\epsilon\right)\rightarrow 0$, $\epsilon\rightarrow 0$, and for all $\epsilon > 0$
\begin{align}\label{1-cont_3}
\E \left(\left| g\left(Y_1\right)-g\left(Y_1'\right) \right| 1_{\left\{\left|Y_1-Y_1'\right|\leq \epsilon\right\}}\right) & \leq \phi\left(\epsilon\right),
\end{align}
where $Y_1'$ is any random variable that has the same distribution as $Y_1$.
\end{definition}
\bigskip

Note that the term $W_{1,n}(k)$ can be written as a second order U-statistic
\begin{equation*}
 U_{a,b}\left(k\right) = \sum_{i=a}^{k}\sum_{j=k+1}^{b}\left(h\left(Y_i,Y_j\right)-\Theta\right), \qquad a\leq k < b,
\end{equation*}
with kernel function $h\left(x,y\right) = 1_{\left\{x\leq y\right\}}$ and constant $\Theta=\E h(Y_1',Y_2')=1/2$, where $Y_1'$ and $Y_2'$ are independent copies of $Y_1$.\par 
By applying Hoeffding's decomposition of U-statistics (\citet{Hoeffding.1948}) to $U_{a,b}(k)$, the kernel function $h$ can be written as the sum
\begin{equation}\label{hoeffding_decomposition_h}
h\left(x,y\right) = \Theta + h_1\left(x\right) + h_2\left(y\right) + g\left(x,y\right),
\end{equation}
where $h_1\left(x\right) = \E h\left(x,Y_2'\right) - \Theta = 1/2 - F\left(x\right),$
\begin{align*}
 \qquad h_2\left(y\right)= \E h\left(Y_1',y\right) - \Theta = F\left(y\right) - 1/2, \qquad
g\left(x,y\right) = h\left(x,y\right) -   h_1\left(x\right) - h_2\left(y\right)-\Theta.
\end{align*}

The following remark states that the bounded functions $h(x,y)=1_{\{x\leq y\}}$, $h_1(x)$, $h_2(x)$ and $g(x,y)$ are $1$-continuous functions.\par

\begin{rremark}\thlabel{remark_1-cont} Let $(Y_j)$ be a stationary process, $Y_1$ has continuous distribution function $F$ with bounded second derivative and the variables $Y_1-Y_k$, $k\geq 1$ satisfy (\ref{Bed_gem_Verteilung}).\par 
\begin{enumerate}[i)]
\item  The function $h(x,y)=1_{\{x\leq y\}}$ is $1$-continuous function (i.e. satisfies (\ref{1-cont_11}) and (\ref{1-cont_22})) with respect to the distribution of $(Y_j)$ with function $\phi(\epsilon)=C\epsilon$, for some $C>0$, see e.g. Corollary 4.1 of \citet{Gerstenberger.2016}.\par 
\item Lemma 2.15 of \citet{Borovkova.2001} yields that if a general function $h(x,y)$ satisfies (\ref{1-cont_11}) and (\ref{1-cont_22}) with some function $\phi(\epsilon)$ then $\E h(x,Y_2')$, where $Y_2'$ is an independent copy of $Y_1$, satisfies the condition in (\ref{1-cont_3}) with the same function $\phi(\epsilon)$. Hence, $h_1(x)=\E h\left(x,Y_2'\right) -1/2$ and $h_2(y)=\E h\left(Y_2',y\right) - 1/2$ are $1$-continuous.\par 
\item  The function $g(x,y)=h(x,y)-h_1(x)-h_2(x)-1/2$ is $1$-continuous (satisfies (\ref{1-cont_11}) and (\ref{1-cont_22})), since $h$ and $h_1$ satisfy (\ref{1-cont_11}), (\ref{1-cont_22}) and (\ref{1-cont_3}) with $\phi(\epsilon)=C\epsilon$, for some $C>0$. In particular,
\begin{align*}
&\E \left(|g(Y_1,Y_k)-g(Y_1',Y_k)|1_{\left\{\left|Y_1-Y_1'\right|\leq \epsilon\right\}} \right)\\
&\leq \E \left(|h(Y_1,Y_k)-h(Y_1',Y_k)|1_{\left\{\left|Y_1-Y_1'\right|\leq \epsilon\right\}} \right) + \E \left(|h_1(Y_1)-h_1(Y_1')|1_{\left\{\left|Y_1-Y_1'\right|\leq \epsilon\right\}} \right)\\
&\leq 2\phi(\epsilon)
\end{align*}
and similarly, $\E \big(|g(Y_k,Y_1)-g(Y_k,Y_1')|1_{\left\{\left|Y_1-Y_1'\right|\leq \epsilon\right\}} \big)\leq 2\phi(\epsilon)$. 
\end{enumerate}
\end{rremark}

\subsubsection*{Auxiliary results}

The following lemma derives the functional central limit theorem for partial sum processes of $(h_1(Y_j))$.

\begin{lemma}\thlabel{remark_h1}
Suppose that the assumptions of \thref{lemma2} hold. Then,
\begin{equation*}
\bigg( \frac{1}{n^{1/2}}\sum_{i=1}^{\left[nt\right]}h_1\left(Y_i\right) \bigg)_{0\leq t \leq 1} \xrightarrow{d} \left( \sigma W\left(t\right) \right)_{0\leq t \leq 1},
\end{equation*}
where $W\left(t\right)$ is a Brownian motion and $\sigma$ is given in (\ref{lrv_sigma2}).
\end{lemma}

\begin{proof}
\citet{Wooldridge.1988} in Corollary 3.2 established a functional central limit theorem for partial sum process $\sum_{i=1}^k \tilde{Y}_i$, $k\geq 1$, for a process $(\tilde{Y}_j)$ which is $L_2$ NED on a strongly mixing process $(\tilde{Z}_j)$. Therefore, \thref{remark_h1} is proved, by showing that  $(h_1(Y_j))$ is $L_2$ NED on a strongly mixing process.\par 
By Proposition 2.11 of \citet{Borovkova.2001}, if $(Y_j)$ is $L_1$ NED on a stationary absolutely regular process $(Z_j)$ with approximation constants $a_k$ and $g(x)$ is $1$-continuous with function $\phi$, then $(g(Y_j))$ is also $L_1$ NED on $(Z_j)$ with approximation constants $a_k'=\phi\left(\sqrt{2a_k}\right)+2\sqrt{2a_k}||g||_{\infty}$. By \thref{remark_1-cont} ii), $h_1(x)=1/2-F(x)$ is $1$-continuous function with $\phi(\epsilon)=C\epsilon$. Thus, the processes $(h_1(Y_j))$ is $L_1$ NED processes with approximation constants $a'_k=C\sqrt{a_k}\geq \phi\left(\sqrt{2a_k}\right)+2\sqrt{2a_k}||h_1||_{\infty}$.\par 
Observe that the variables $\eta_k := h_1(Y_1)-\E(h_1(Y_1)|\mathcal{G}_{-k}^k)$ satisfy the $L_1$ NED condition (\ref{ned_condition}) with $a'_k$. To show $L_2$ NED for $(h_1(Y_j))$ note that by definition of $h_1$, $\E h_1(Y_1)=0$ and $|h_1(Y_1)|\leq C<\infty$. Thus,
\[
\E\eta_k^2 \leq \E\Big(|\eta_k| \cdot (|h_1(Y_1)|+|\E (h_1(Y_1)|\mathcal{G}_{-k}^k)|)\Big) \leq C\E|\eta_k|\leq Ca'_k.
\]
The last inequality holds, because by $L_1$ NED of $(h_1(Y_j))$, $\E|h_1(Y_1)-\E(h_1(Y_1)|\mathcal{G}_{-k}^k)|\leq a_k'$. Therefore, the process $(h_1(Y_j))$ is also $L_2$ NED on $(Z_j)$ with approximation constant $a_k'=Ca_k^{1/2}$. Moreover, absolute regularity of $(Z_j)$ implies the process $(Z_j)$ is also strong mixing. Assumption (\ref{cond_a_beta}) yields $a'_k=O(k^{-1/2})$ and $\beta_k=O(k^{-2})$. Thus, $(h_1(Y_j))$ satisfies the conditions of Corollary 3.2 of \citet{Wooldridge.1988} which proves the lemma.
\end{proof}
\par\bigskip

Next we show that the contribution of $g(x,y)$ of the Hoeffding decomposition (\ref{hoeffding_decomposition_h}) is negligible.

\begin{lemma}\thlabel{g_negligible2}
Suppose that the assumptions of \thref{lemma2} hold. Then,
\begin{equation}\label{g_negligible2_1}
n^{-3/2}\max_{1\leq k \leq n}\max_{1\leq l \leq n} \Big| \sum_{i=1}^{k}\sum_{j=1}^{l}g(Y_i,Y_j)\Big| =o_P(1).
\end{equation}
\end{lemma}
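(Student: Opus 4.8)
The plan is to control the two-parameter partial-sum process of the degenerate kernel $g$ by a maximal inequality, exploiting that $g$ is the degenerate part of the Hoeffding decomposition and is $1$-continuous (Remark 5.1 iii)). First I would observe that $g(x,y)=h(x,y)-h_1(x)-h_2(y)-1/2$ is a canonical (completely degenerate) kernel, meaning $\E(g(Y_1',y))=0$ and $\E(g(x,Y_2'))=0$ for independent copies, so the double sum $\sum_{i=1}^{k}\sum_{j=1}^{l}g(Y_i,Y_j)$ is a degenerate $U$-type statistic whose expectation vanishes and whose fluctuations are of smaller order than the non-degenerate linear part $s_{a,b}(k)$. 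The goal is to show the maximum over the $n\times n$ grid of indices is $o_P(n^{3/2})$, i.e. strictly smaller than the $n^{3/2}$ scaling that governs $U_{a,b}(k)$.

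The key steps, in order, are as follows. First I would fix the block structure and reduce the double maximum over $(k,l)$ to a moment bound on $\sum_{i=1}^{k}\sum_{j=1}^{l}g(Y_i,Y_j)$ for fixed $(k,l)$, combined with a chaining or dyadic-blocking argument to pass from pointwise control to the supremum over the grid. Second, I would estimate a suitable moment (second or fourth) of the degenerate double sum using the near-epoch-dependence of $(Y_j)$ on the absolutely regular process $(Z_j)$: since $g$ is $1$-continuous with $\phi(\epsilon)=C\epsilon$ (Remark 5.1 iii)), the composed array $(g(Y_i,Y_j))$ inherits mixing-type decay governed by $\beta_k$ and $\sqrt{a_k}$, and Assumption 2.1's summability $\sum_k k^2(\beta_k+\sqrt{a_k})<\infty$ provides enough decay to bound the covariances of the kernel values. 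The degeneracy kills the leading-order cross terms, so the dominant contribution to $\E\big(\sum_{i=1}^{k}\sum_{j=1}^{l}g(Y_i,Y_j)\big)^2$ is of order $kl$ up to the mixing corrections, giving a bound of order $n^2$ for the moment over the full grid. Third, I would combine this $O(n^2)$ moment bound (which is $o(n^3)$) with a maximal inequality of Móricz or Menshov–Rademacher type, or with the moment method applied over a polynomial number of grid points, to upgrade to the supremum: the maximum over the $n^2$ grid points satisfies $n^{-3}\E\max_{k,l}(\cdots)^2\to 0$, whence $n^{-3/2}\max_{k,l}|\sum\sum g|=o_P(1)$.

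The hard part will be establishing the moment bound for the degenerate double sum under the near-epoch-dependent structure rather than under independence, and then lifting it uniformly over the two-dimensional index grid. Under independence the degeneracy of $g$ makes the second moment exactly of order $kl$, but with dependence one must approximate $g(Y_i,Y_j)$ by a function of a finite block $Z_{i-m},\ldots,Z_{i+m}$ and $Z_{j-m},\ldots,Z_{j+m}$, bound the approximation error using $1$-continuity and the NED constants $a_k$, and then apply the $\beta$-mixing bound to the truncated variables; the interplay between the approximation radius $m$ and the mixing lag must be balanced so that the summability condition $\sum_k k^2(\beta_k+\sqrt{a_k})<\infty$ delivers a summable covariance series. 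The factor $k^2$ in the summability condition is precisely what is needed to absorb the two index directions of the double sum and to make the maximal inequality go through, so I would track that factor carefully. Once the uniform moment bound is in place, the passage to the supremum and the conclusion $o_P(n^{3/2})$ is routine via Markov's inequality.
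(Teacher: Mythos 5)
Your proposal matches the paper's proof in all essentials: the paper establishes exactly your second-moment bound, $\E\big(n^{-3/2}\sum_{i=q+1}^{p}\sum_{j=h+1}^{l}g(Y_i,Y_j)\big)^2\leq Cn^{-3}(p-q)(l-h)$, by invoking Lemma 1 of \citet{Dehling.2015} --- whose proof is precisely the block-approximation/mixing argument you sketch for degenerate $1$-continuous bounded kernels, and which in fact requires only $\sum_k k(\beta_k+\sqrt{a_k}+\phi(a_k))<\infty$, so the factor $k^2$ in Assumption \ref{H1} is not, as you suggest, what absorbs the two index directions --- and then passes to the double maximum via Theorem 10.2 of \citet{Billingsley.1999}, which is exactly the M\'oricz-type maximal inequality you name, applied successively in the $k$- and then the $l$-direction (using the elementary trick $(p-q)(l-h)\leq\big((p-q)(l-h)\big)^{4/3}$ to obtain the exponent $\alpha=4/3>1$ the inequality demands). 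One caveat: your fallback option of a plain union bound over the $n^2$ grid points with only second moments would fail (it yields a bound of order $n$, not $o(1)$, and even fourth moments give only $O(1)$), so the maximal-inequality/chaining step is indispensable --- but since that is your primary route, the plan is sound.
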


\begin{proof}
We first prove for $1 \leq q \leq p \leq n$, $1 \leq h \leq l \leq n$,
\begin{equation}\label{remark_lemma1_gilt}
\E\Big(\Big|n^{-3/2}\sum_{i=q+1}^{p}\sum_{j=h+1}^{l}g(Y_i,Y_j)\Big|^2\Big)\leq \frac{C}{n^{3}}(p-q)(l-h).
\end{equation}

\textit{Proof of (\ref{remark_lemma1_gilt})} Lemma 1 of \citet{Dehling.2015} showed if $f$ is a $1$-continuous bounded degenerate kernel function and $\phi_f(\epsilon)$ satisfies 
\begin{equation}\label{condition_beta_a_phi}
\sum_{k=1}^{\infty}k(\beta(k)+\sqrt{a_k}+\phi_f(a_k))<\infty,
\end{equation}
then
\begin{equation}\label{lemma1_DFGW}
\E\Big(\sum_{i=1}^{k}\sum_{j=k+1}^{n}f(Y_i,Y_j)\Big)^2\leq Ck(n-k), \qquad 1\leq k \leq n,
\end{equation}
where the constant $C$ depends on the left hand side of (\ref{condition_beta_a_phi}). The proof of Lemma 1 in \citet{Dehling.2015} shows that (\ref{lemma1_DFGW}) can be extended to (\ref{remark_lemma1_gilt}). Hence, to complete the proof, we need to verify that $g(x,y)$ satisfies the assumptions of Lemma 1 of \citet{Dehling.2015}.\par 
By the Hoeffding decomposition (\ref{hoeffding_decomposition_h}), $g(x,y)=h(x,y)+F(x)-F(y)-1/2$. Note that $\E F(Y_1)=1/2$, thus $\E g(x,Y_1)=\E g(Y_1,y)=0$, i.e. $g(x,y)$ is a degenerate kernel. Furthermore, $g(x,y)$ is bounded, since $h(x,y)=1_{\{x\leq y\}}$ and $F(x)$ are bounded. By \thref{remark_1-cont} iii) $g(x,y)$ is $1$-continuous with $\phi(\epsilon)=C\epsilon$, the latter satisfies (\ref{condition_beta_a_phi}) because of condition (\ref{cond_a_beta}). This completes the proof of (\ref{remark_lemma1_gilt}).
\par\bigskip

\textit{Proof of (\ref{g_negligible2_1})} To prove the lemma, we use Theorem 10.2 of \citet{Billingsley.1999}, which states that if the increments of partial sums $S_i = \sum_{j=1}^{i}\zeta_i$ of random variables $\zeta_i$, $i=1,2,\ldots$ are bounded in probability, in particular if there exist $\alpha > 1$, $\beta>0$ and non-negative numbers $u_{n,1},\ldots,u_{n,n}$ such that
\begin{equation*}
\p\big(\left|S_j-S_i\right|\geq \epsilon\big)\leq \frac{1}{\epsilon^{\beta}}\bigg(\sum_{l=i+1}^{j}u_{n,l}\bigg)^{\alpha},
\end{equation*}
for $\epsilon>0$, $0\leq i \leq j \leq n$, then for all $\epsilon>0$, $n\geq 2$,
\begin{equation*}
\p\left(\max_{1\leq  k \leq n}\left|S_k\right| \geq \epsilon\right) \leq \frac{K}{\epsilon^{\beta}}\bigg(\sum_{l=1}^{n}u_{n,l}\bigg)^{\alpha},
\end{equation*}
where $K>0$ depends only on $\alpha$ and $\beta$.\par 

Denote 
\begin{equation*}
G_n(l) = n^{-3/2}\max_{1\leq k\leq n}\Big|\sum_{i=1}^{k}\sum_{j=1}^{l}g(Y_i,Y_j)\Big|,
\end{equation*}
with $G_n(0)=0$ and define random variables $\zeta_i = G_n(i)-G_n(i-1)$, where $\zeta_0=0$. Note that $S_i = \sum_{j=1}^{i}\zeta_i = G_n(i)$ and by using the reverse triangle inequality, for $1\leq h \leq l \leq n$,
\begin{align*}
\p\left(\left|S_l-S_h\right|\geq \epsilon\right) &\leq \p\Big(n^{-3/2}\max_{1\leq k\leq n}\Big|\sum_{i=1}^{k}\sum_{j=1}^{l}g(Y_i,Y_j)-\sum_{i=1}^{k}\sum_{j=1}^{h}g(Y_i,Y_j)\Big|\geq \epsilon\Big)\\
&=\p\Big(n^{-3/2}\max_{1\leq k\leq n}\Big|\sum_{i=1}^{k}\sum_{j=h+1}^{l}g(Y_i,Y_j)\Big|\geq \epsilon\Big).
\end{align*}
Let us now define
\[
\tilde{S}_k = \sum_{i=1}^{k}\Big(n^{-3/2}\sum_{j=h+1}^{l}g(Y_i,Y_j)\Big),
\]
and note that $\tilde{S}_k$ depends on $h$ and $l$. Furthermore, note that for $1 \leq q \leq p \leq n$,
\[
\big|\tilde{S}_p-\tilde{S}_q\big| =  n^{-3/2}\Big|\sum_{i=q+1}^{p}\sum_{j=h+1}^{l}g(Y_i,Y_j)\Big|.
\]
By Markov inequality and (\ref{remark_lemma1_gilt}),
\[
\p\Big(\big|\tilde{S}_p-\tilde{S}_q\big|\geq \epsilon\Big) \leq \frac{1}{\epsilon^2}\E\Big(\big|\tilde{S}_p-\tilde{S}_q\big|^2\Big)\leq \frac{1}{\epsilon^2}\frac{C}{n^{3}}(p-q)(l-h)\leq \frac{1}{\epsilon^2}\Big(\sum_{t=q+1}^{p}u_{n,t}\Big)^{4/3},
\]
where $u_{n,t}=\frac{C^{3/4}}{n^{9/4}}(l-h)$. Hence, $\tilde{S}_i$ satisfies assumption of Theorem 10.2 of \citet{Billingsley.1999} with $\beta=2$, $\alpha=4/3$. Thus,  for any fixed $\epsilon>0$,
\begin{equation*}
\p\Big(\max_{1\leq k\leq n}\big|\tilde{S}_k\big|\geq \epsilon\Big)
 \leq \frac{K}{\epsilon^2}\bigg(\sum_{t=1}^{n}\frac{C^{3/4}}{n^{9/4}}(l-h)\bigg)^{4/3}\leq \frac{1}{\epsilon^2}\bigg((l-h)\frac{C^{3/4}}{n^{5/4}}\bigg)^{4/3}
\end{equation*}
and moreover
\[
\p\left(\left|S_l-S_h\right|\geq \epsilon\right)\leq \p\Big(\max_{1\leq k\leq n}\big|\tilde{S}_k\big|\geq \epsilon\Big) \leq \frac{1}{\epsilon^2}\bigg(\sum_{t=h+1}^{l}u_{n,t}\bigg)^{4/3},
\]
where $u_{n,t}=\frac{C^{3/4}}{n^{5/4}}$. Therefore, $S_i$ satisfies assumption of Theorem 10.2 of \citet{Billingsley.1999} with $\beta=2$, $\alpha=4/3$. Finally, for any fixed $\epsilon>0$, as $n\rightarrow\infty,$
\begin{multline*}
\p\Big(n^{-3/2}\max_{1\leq l \leq n} \max_{1\leq k \leq n}\Big| \sum_{i=1}^{k}\sum_{j=1}^{l}g(Y_i,Y_j)\Big|\geq \epsilon\Big)\\
=\p\Big(\max_{1\leq l \leq n}\big|S_l\big|\geq \epsilon\Big)\leq \frac{K}{\epsilon^2}\bigg(\sum_{t=1}^{n}\frac{C^{3/4}}{n^{5/4}}\bigg)^{4/3} \leq \frac{K}{\epsilon^2}\frac{1}{n^{1/3}} \rightarrow 0,
\end{multline*}
which proves the lemma.
\end{proof}
\par\bigskip

\par\bigskip
In the following we state auxiliary results to deal with the terms
\begin{equation*}
\tilde{U}_{1,\hat{k}}(k^*):=\sum_{i=1}^{k^*}\sum_{j=k^*+1}^{\hat{k}} 1_{\{Y_j < Y_i \leq Y_j+\Delta_n\}}, \qquad \hat{k}\geq k^*=[n\theta],
\end{equation*}
and
\begin{equation*}
\tilde{U}_{\hat{k}+1,n}(k^*):=\sum_{i=\hat{k}+1}^{k^*}\sum_{j=k^*+1}^{n} 1_{\{Y_j < Y_i \leq Y_j+\Delta_n\}}, \qquad \hat{k}<k^*
\end{equation*}
appearing in the proof of \thref{lemma1}.\par
Note that the terms $\tilde{U}_{1,\hat{k}}(k^*)$ and $\tilde{U}_{\hat{k}+1,n}(k^*)$ can be written as a second order U-statistic
\begin{equation*}
 \tilde{U}_{a,b}\left(k\right) = \sum_{i=a}^{k}\sum_{j=k+1}^{b}h_n\left(Y_i,Y_j\right), \qquad a\leq k < b,
\end{equation*}
with kernel function $h_n\left(x,y\right) = 1_{\{y < x \leq y+\Delta_n\}}$.\par 
Applying Hoeffding's decomposition of U-statistics to $\tilde{U}_{a,b}(k)$, decomposes the kernel function $h_n$ into the sum
\begin{equation}\label{hoeffding_decomposition_hn}
h_n\left(x,y\right) = \Theta_{\Delta_n} + h_{1,n}\left(x\right) + h_{2,n}\left(y\right) + g_n\left(x,y\right),
\end{equation}
with $\Theta_{\Delta_n} = \E \big(1_{\{Y_2' < Y_1' \leq Y_2'+\Delta_n\}}\big)$,
\begin{align*}
h_{1,n}\left(x\right) &= \E h_n\left(x,Y_2'\right) - \Theta_{\Delta_n} = F\left(x\right) - F\left(x-\Delta_n\right) - \Theta_{\Delta_n}, \\
h_{2,n}\left(y\right)&= \E h_n\left(Y_1',y\right) - \Theta_{\Delta_n}  = F\left(y+\Delta_n\right) - F\left(y\right) - \Theta_{\Delta_n},\\
g_n\left(x,y\right) &= h_n\left(x,y\right) -   h_{1,n}\left(x\right) - h_{2,n}\left(y\right)-\Theta_{\Delta_n},
\end{align*}
where $Y_1'$ and $Y_2'$ are independent copies of $Y_1$.\par \bigskip

\begin{lemma}\thlabel{Lemma_abschaetzung_bp_kdach}
Suppose that the assumptions of \thref{lemma1} hold. Then,
\begin{equation}\label{Lemma_abschaetzung_bp_kdach1}
n^{-3/2}\Big| \tilde{U}_{1,\hat{k}}(k^*) -k^*(\hat{k}-k^*)\Theta_{\Delta_n} \Big| = o_P\left(1\right)
\end{equation}
and
\begin{equation}\label{Lemma_abschaetzung_bp_kdach2}
n^{-3/2}\Big| \tilde{U}_{\hat{k}+1,n}(k^*) -(k^*-\hat{k})(n-k^*)\Theta_{\Delta_n} \Big| = o_P\left(1\right),
\end{equation}
where $Y_1'$ and $Y_2'$ are independent copies of $Y_1$ .
\end{lemma}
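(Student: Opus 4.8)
The plan is to expand $\tilde U_{1,\hat k}(k^*)$ via the Hoeffding decomposition (\ref{hoeffding_decomposition_hn}) of its kernel $h_n$ and show that every term other than the mean is negligible after scaling by $n^{-3/2}$. Collecting coefficients in the double sum gives
\[
\tilde U_{1,\hat k}(k^*) = k^*(\hat k - k^*)\Theta_{\Delta_n} + (\hat k - k^*)\sum_{i=1}^{k^*}h_{1,n}(Y_i) + k^*\sum_{j=k^*+1}^{\hat k}h_{2,n}(Y_j) + \sum_{i=1}^{k^*}\sum_{j=k^*+1}^{\hat k}g_n(Y_i,Y_j),
\]
so (\ref{Lemma_abschaetzung_bp_kdach1}) reduces to proving that the last three terms, divided by $n^{3/2}$, tend to $0$ in probability. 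The ingredients I would use throughout are already at hand: the bounds $\|h_{1,n}\|_\infty,\|h_{2,n}\|_\infty\le C\Delta_n$ (immediate from $|F(x)-F(x-\Delta_n)|\le\|f\|_\infty\Delta_n$ and $\Theta_{\Delta_n}=O(\Delta_n)$, cf.\ (\ref{remark_theta_delta_n})), the rate $|\hat k-k^*|=O_P(\Delta_n^{-2})$ from (\ref{rate_consistency_bp_estimator}), and Assumption \ref{H2}, i.e.\ $k^*\sim\theta n$ and $n\Delta_n^2\to\infty$.

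The two linear terms I would dispatch by crude pathwise bounds, which are wasteful but already sufficient thanks to the favourable scaling. For the $h_{1,n}$ term the inner sum runs over the \emph{deterministic} range $1,\dots,k^*$, so $\big|\sum_{i=1}^{k^*}h_{1,n}(Y_i)\big|\le k^*\|h_{1,n}\|_\infty\le Cn\Delta_n$; multiplying by $n^{-3/2}|\hat k-k^*|=O_P(n^{-3/2}\Delta_n^{-2})$ gives order $O_P(n^{-1/2}\Delta_n^{-1})=O_P\big((n\Delta_n^2)^{-1/2}\big)=o_P(1)$. For the $h_{2,n}$ term the range $k^*+1,\dots,\hat k$ is random, but pathwise $\big|\sum_{j=k^*+1}^{\hat k}h_{2,n}(Y_j)\big|\le|\hat k-k^*|\,\|h_{2,n}\|_\infty=O_P(\Delta_n^{-1})$, and multiplying by $n^{-3/2}k^*\le Cn^{-1/2}$ again yields $O_P((n\Delta_n^2)^{-1/2})=o_P(1)$.

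The degenerate term is the genuine obstacle, and I would handle it exactly as in \thref{g_negligible2}. First I would verify that $g_n$ is, uniformly in $n$, a bounded degenerate $1$-continuous kernel with modulus $\phi(\epsilon)=C\epsilon$: since $h_n(x,y)=1_{\{x\le y+\Delta_n\}}-1_{\{x\le y\}}$ is a difference of two copies of $1_{\{x\le y\}}$ with the second argument shifted by $\Delta_n$, condition (\ref{Bed_gem_Verteilung}) shows each is $1$-continuous with modulus $C\epsilon$ independent of the shift (\thref{remark_1-cont} i)), so $h_{1,n},h_{2,n},g_n$ inherit $1$-continuity with an $n$-free modulus, and $\E g_n(x,Y_2')=\E g_n(Y_1',y)=0$ by construction. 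Hence the argument behind (\ref{remark_lemma1_gilt}) (Lemma 1 of \citet{Dehling.2015}, whose hypothesis (\ref{condition_beta_a_phi}) holds by (\ref{cond_a_beta})) applies uniformly in $n$ and gives $\E\big(\sum_{i=1}^{k^*}\sum_{j=h+1}^{l}g_n(Y_i,Y_j)\big)^2\le C\,k^*(l-h)$ for $k^*\le h\le l$. Then, fixing $\eta>0$ and choosing $M$ with $\p(|\hat k-k^*|>M\Delta_n^{-2})<\eta$ for large $n$ (by (\ref{rate_consistency_bp_estimator})), I would restrict to the deterministic range $k^*<m\le K:=k^*+\lceil M\Delta_n^{-2}\rceil$ and apply Theorem 10.2 of \citet{Billingsley.1999} to the partial-sum process $S_m=n^{-3/2}\sum_{i=1}^{k^*}\sum_{j=k^*+1}^{m}g_n(Y_i,Y_j)$: Markov's inequality, the moment bound, and the exponent-inflation trick $(l-h)^{1/3}\ge1$ verify its hypothesis with $\beta=2$, $\alpha=4/3$ and $u_{n,t}=(n^{-3}Ck^*)^{3/4}$. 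The maximal inequality then yields
\[
\p\Big(\max_{k^*<m\le K}|S_m|\ge\epsilon\Big)\le K'\epsilon^{-2}\Big(\sum_{t=k^*+1}^{K}u_{n,t}\Big)^{4/3}\le C''\epsilon^{-2}M^{4/3}\,n^{-2}\Delta_n^{-8/3},
\]
which tends to $0$ because $n\Delta_n^2\to\infty$ forces $\Delta_n^{-8/3}\ll n^{4/3}$, hence $n^{-2}\Delta_n^{-8/3}\ll n^{-2/3}$. Letting $\eta\downarrow0$ shows the $g_n$ term is $o_P(1)$ and completes (\ref{Lemma_abschaetzung_bp_kdach1}); assertion (\ref{Lemma_abschaetzung_bp_kdach2}) follows by the identical argument with the two index blocks interchanged (the short random block now being the $i$-range $\hat k+1,\dots,k^*$ and the long deterministic block the $j$-range $k^*+1,\dots,n$), so that the partial-sum process runs over the lower summation index instead. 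The one delicate point is precisely the random upper limit $\hat k$ in the degenerate term, and controlling it uniformly is exactly what the restriction to $|\hat k-k^*|\le M\Delta_n^{-2}$ together with the Billingsley maximal inequality achieves.
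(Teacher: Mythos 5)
Your proof is correct, and its skeleton coincides with the paper's: the same Hoeffding decomposition (\ref{hoeffding_decomposition_hn}), and the same crude pathwise treatment of the two linear terms via $\|h_{1,n}\|_\infty,\|h_{2,n}\|_\infty\leq C\Delta_n$ combined with $\Delta_n^2|\hat{k}-k^*|=O_P(1)$ and $n\Delta_n^2\to\infty$ (the paper bounds the sum of the two linear terms at once by $n^{-3/2}Ck^*|\hat{k}-k^*|\Delta_n$, which is exactly your computation). Where you genuinely diverge is the degenerate term. The paper observes that $\sum_{i=1}^{k^*}\sum_{j=k^*+1}^{\hat{k}}g_n(Y_i,Y_j)$ is a difference of two rectangles anchored at $(1,1)$, bounds it by twice the global double maximum, and invokes \thref{g_negligible2} wholesale with $g$ replaced by $g_n$ — leaving implicit (as you correctly make explicit) that the boundedness, degeneracy and $1$-continuity modulus $\phi_{g_n}(\epsilon)=C\epsilon$ are uniform in $n$, so that the moment bound (\ref{remark_lemma1_gilt}) and the argument via Theorem 10.2 of \citet{Billingsley.1999} carry over verbatim. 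You instead localize: restricting to the event $|\hat{k}-k^*|\leq M\Delta_n^{-2}$ and running the Billingsley maximal inequality only over the window $(k^*,k^*+\lceil M\Delta_n^{-2}\rceil]$, obtaining the explicit bound $C\epsilon^{-2}M^{4/3}n^{-2}\Delta_n^{-8/3}=o(n^{-2/3})$. Your route is slightly longer, since it re-proves a maximal inequality that \thref{g_negligible2} already delivers globally, but it buys a quantitative rate, exploits the consistency of $\hat{k}$ (which the paper's global bound never needs for this term), and surfaces the $n$-uniformity issue that the paper glosses over when transferring \thref{g_negligible2} from $g$ to $g_n$. Your one-line reduction of (\ref{Lemma_abschaetzung_bp_kdach2}) by symmetry, with the partial-sum process running over the lower summation index, matches the paper's ``similar argument'' closing remark and is sound.
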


\begin{proof}
Let us start with the proof of (\ref{Lemma_abschaetzung_bp_kdach1}). The Hoeffding decomposition (\ref{hoeffding_decomposition_hn}) yields
\begin{align*}
\tilde{U}_{1,\hat{k}}(k^*) -k^*(\hat{k}-k^*)\Theta_{\Delta_n} = \sum_{i=1}^{k^*}\sum_{j=k^*+1}^{\hat{k}} (h_{1,n}\left(Y_i\right) + h_{2,n}\left(Y_j\right) + g_n\left(Y_i,Y_j\right))\\
=(\hat{k}-k^*)\sum_{i=1}^{k^*}h_{1,n}\left(Y_i\right) + k^*\sum_{j=k^*+1}^{\hat{k}} h_{2,n}\left(Y_j\right) + \sum_{i=1}^{k^*}\sum_{j=k^*+1}^{\hat{k}}g_n\left(Y_i,Y_j\right).
\end{align*}
Therefore,
\begin{multline*}
n^{-3/2}\Big| \tilde{U}_{1,\hat{k}}(k^*) -k^*(\hat{k}-k^*)\Theta_{\Delta_n} \Big|\\ \leq n^{-3/2}\Big|(\hat{k}-k^*)\sum_{i=1}^{k^*}h_{1,n}\left(Y_i\right) + k^*\sum_{j=k^*+1}^{\hat{k}} h_{2,n}\left(Y_j\right)\Big| + n^{-3/2}\Big| \sum_{i=1}^{k^*}\sum_{j=k^*+1}^{\hat{k}}g_n\left(Y_i,Y_j\right)\Big|.
\end{multline*}
Note that the indicator function $h_n(x,y)=1_{\{y < x \leq y+\Delta_n\}}$ is bounded.\\
The distribution function $F$ of $Y_1$ has bounded second derivative. Hence, as $n\rightarrow\infty,$
\begin{align}
\Theta_{\Delta_n} &= \E 1_{\{Y_2' < Y_1' \leq Y_2'+\Delta_n\}} = \p\left(Y_2' < Y_1' \leq Y_2'+\Delta_n\right)\nonumber\\ 
&= \int_{\mathbb{R}}\left(F\left(y+\Delta_n\right)-F(y)\right)dF(y) = \Delta_n \bigg(\int_{\mathbb{R}}f^2\left(y\right)dy + o(1)\bigg)\sim C\Delta_n. \label{remark_theta_delta_n}
\end{align}
Thus,
\begin{align}
|h_{1,n}(x)| &\leq |F(x)-F(x-\Delta_n)-\Theta_{\Delta_n}|\leq C\Delta_n + \Theta_{\Delta_n} \leq C\Delta_n,\label{abschaetzung_h1n_kleiner_delta}\\
|h_{2,n}(x)| &\leq |F(x+\Delta_n)-F(x)-\Theta_{\Delta_n}|\leq C\Delta_n + \Theta_{\Delta_n} \leq C\Delta_n,\nonumber
\end{align}
where $C>0$ is a constant. Hence, $g_n\left(x,y\right) = h_n\left(x,y\right) - h_{1,n}\left(x\right)- h_{2,n}\left(y\right)-\Theta_{\Delta_n}$ is bounded. Since $\E h_{1,n}(Y_1)=0$ and $\E h_{2,n}(Y_1)=0$, $g_n(x,y)$ is a degenerate kernel, i.e.  $\E g_n(x,Y_1)=\E g_n(Y_1,y)=0$. $h_n(x,y)$ satisfies (\ref{1-cont_11}) and (\ref{1-cont_22}) with $\phi_{h_n}(\epsilon)=C\epsilon$, see e.g. Corollary 4.1 of \citet{Gerstenberger.2016}, where constant $C$ does not depend on $n$. Then, with similar argument as in \thref{remark_1-cont}, $h_{1,n}$ and $h_{2,n}$ are $1$-continuous and therefore, $g_n(x,y)$ is $1$-continuous with function $\phi_{g_n}(\epsilon)=C\epsilon$ satisfying (\ref{condition_beta_a_phi}). Hence, $g_n(x,y)$ satisfies the conditions on $g(x,y)$ in \thref{g_negligible2}, which yields
\begin{equation*}
n^{-3/2}\Big| \sum_{i=1}^{k^*}\sum_{j=k^*+1}^{\hat{k}}g_n\left(Y_i,Y_j\right)\Big|\leq 2\max_{1\leq k \leq n}\max_{1\leq k \leq n}n^{-3/2}\Big| \sum_{i=1}^{k}\sum_{j=1}^{l}g_n\left(Y_i,Y_j\right)\Big|=o_P(1).
\end{equation*}

Thus, it remains to show $n^{-3/2}\big|(\hat{k}-k^*)\sum_{i=1}^{k^*}h_{1,n}\left(Y_i\right) + k^*\sum_{j=k^*+1}^{\hat{k}} h_{2,n}\left(Y_j\right)\big|=o_P(1)$.\par
By (\ref{abschaetzung_h1n_kleiner_delta}), we receive the following inequality
\begin{multline*}
n^{-3/2}\Big|(\hat{k}-k^*)\sum_{i=1}^{k^*}h_{1,n}\left(Y_i\right) + k^*\sum_{j=k^*+1}^{\hat{k}} h_{2,n}\left(Y_j\right)\Big|\\
\leq n^{-3/2}C(\hat{k}-k^*)k^*\Delta_n = C\frac{k^*}{n}\frac{\Delta_n^2|\hat{k}-k^*|}{n^{1/2}\Delta_n}=o_P(1),
\end{multline*}
where we used the consistency of $\hat{k}$ in (\ref{rate_consistency_bp_estimator}), $\Delta_n^2|\hat{k}-k^*|=O_P(1),$ and Assumption \ref{H2}, $k^*/n\sim\theta$ and $n\Delta_n^2\rightarrow\infty$ as $n\rightarrow\infty$. This completes the proof of (\ref{Lemma_abschaetzung_bp_kdach1}).\par 
The proof of (\ref{Lemma_abschaetzung_bp_kdach2}) follows using similar argument.
\end{proof}

\subsubsection{Proof of \thref{lemma1}}\label{sec_lemma_1}

Before proceeding to \thref{lemma1}, similarly to the notation $W_{m,n}(k)$ in (\ref{Wilcoxon-TS}), we define
\begin{equation}\label{define_U_Y}
U_{m,n}(k)= \sum_{i=m}^{k}\sum_{j=k+1}^{n}(1_{\left\{Y_i\leq Y_j\right\}}-1/2), \qquad m\leq k \leq n.
\end{equation}
Note that $W_{m,n}(k)$ depends on $(X_m, \ldots, X_n)$, where $U_{m,n}(k)$ depends on $(Y_m, \ldots, Y_n)$.

\begin{nono-lemma}
Let $X_1,\ldots,X_n$ follow the model in (\ref{model_srd}), and Assumptions \ref{H1} and \ref{H2} be satisfied. Let $\hat{k}$ be defined as in (\ref{bp_estimator_wilcoxon}). Then,
\begin{align}
n^{-3/2}\max_{1\leq k \leq \hat{k}} \big|W_{1,\hat{k}}(k) \big| 
&= n^{-3/2}\max_{1\leq k \leq \hat{k}} \big| U_{1,\hat{k}}(k) \big| + o_P\left(1\right)\label{lemma1_1}\\
n^{-3/2}\max_{\hat{k}< k \leq n} \big| W_{\hat{k}+1,n}(k) \big|
&= n^{-3/2}\max_{\hat{k}< k \leq n} \big| U_{\hat{k}+1,n}(k) \big| + o_P\left(1\right).\label{lemma1_2}
\end{align}
\end{nono-lemma}

\begin{proof}

We have to distinguish between two cases, $\hat{k}\leq k^*$ and $\hat{k}>k^*$, where $k^*=[n\theta]$.\par 

If $\hat{k}\leq k^*$, then by (\ref{model_srd}), $X_i=Y_i$, $i=1,\ldots,\hat{k}$, and hence, $W_{1,\hat{k}}(k) = U_{1,\hat{k}}(k)$, $k=1,\ldots,\hat{k}$. In turn, $X_i=Y_i$ for $i=\hat{k}+1,\ldots,k^*$, and $X_i=Y_i+\Delta_n$ for $i=k^*+1,\ldots,n$. Since $1_{\left\{Y_i+\Delta_n\leq Y_j+\Delta_n\right\}}=1_{\left\{Y_i\leq Y_j\right\}}$,  $W_{\hat{k}+1,n}(k)$ can be decomposed into two terms,
\begin{align*}
W_{\hat{k}+1,n}(k)= \begin{cases}
U_{\hat{k}+1,n}(k) + \sum_{i=\hat{k}+1}^{k}\sum_{j=k^*+1}^{n} 1_{\{Y_j < Y_i \leq Y_j+\Delta_n\}},\quad \hat{k}< k \leq k^*\\
U_{\hat{k}+1,n}(k) + \sum_{i=\hat{k}+1}^{k^*}\sum_{j=k+1}^{n} 1_{\{Y_j < Y_i \leq Y_j+\Delta_n\}},\quad k^* < k \leq n.
\end{cases}
\end{align*}
If $\hat{k}>k^*$, similar argument yields, $W_{\hat{k}+1,n}(k)=U_{\hat{k}+1,n}(k)$, for $k=\hat{k}+1,\ldots,n$ and 
\begin{align}\label{lemma1_11}
W_{1,\hat{k}}(k) = \begin{cases}
U_{1,\hat{k}}(k) + \sum_{i=1}^{k}\sum_{j=k^*+1}^{\hat{k}} 1_{\{Y_j < Y_i \leq Y_j+\Delta_n\}},\quad 1\leq k \leq k^*\\
U_{1,\hat{k}}(k) + \sum_{i=1}^{k^*}\sum_{j=k+1}^{\hat{k}} 1_{\{Y_j < Y_i \leq Y_j+\Delta_n\}},\quad k^* < k \leq \hat{k}.
\end{cases}
\end{align}
\textit{Proof of (\ref{lemma1_1})}. For $\hat{k}\leq k^*$, equation (\ref{lemma1_1}) holds trivially, since $W_{1,\hat{k}}(k) = U_{1,\hat{k}}(k)$, $k=1,\ldots,\hat{k}$.\par 
For $\hat{k} > k^*$, equation (\ref{lemma1_11}) yields,
\begin{equation*}
\Big|W_{1,\hat{k}}(k) - U_{1,\hat{k}}(k)\Big| \leq \sum_{i=1}^{k^*}\sum_{j=k^*+1}^{\hat{k}} 1_{\{Y_j < Y_i \leq Y_j+\Delta_n\}} =: I_{1,\hat{k}}(k^*),
\end{equation*}
for all $1\leq k \leq \hat{k}$.
Hence, using the reverse triangle inequality,
\begin{align*}
\Big| n^{-3/2}\max_{1\leq k \leq \hat{k}}\big| W_{1,\hat{k}}(k)\big| - n^{-3/2}\max_{1\leq k \leq \hat{k}}\big| U_{1,\hat{k}}(k)\big| \Big|\leq n^{-3/2}I_{1,\hat{k}}(k^*).
\end{align*}
Thus, property (\ref{lemma1_1}) holds if $n^{-3/2}I_{1,\hat{k}}(k^*)=o_P(1)$.\par 
By \thref{Lemma_abschaetzung_bp_kdach}, $n^{-3/2}I_{1,\hat{k}}(k^*) = n^{-3/2}k^*(\hat{k}-k^*)\Theta_{\Delta_n}+o_P(1),$
where $\Theta_{\Delta_n}= \E \big(1_{\{Y_2' < Y_1' \leq Y_2'+\Delta_n\}}\big)$ and $Y_1'$ and $Y_2'$ are independent copies of $Y_1$. The distribution function $F$ of $Y_1$ has bounded second derivative. Hence, as $n\rightarrow\infty$, by (\ref{remark_theta_delta_n}),
\begin{align*}
\Theta_{\Delta_n} = \Delta_n \bigg(\int_{\mathbb{R}}f^2\left(y\right)dy + o(1)\bigg).
\end{align*}
Furthermore, by (\ref{rate_consistency_bp_estimator}), $\Delta_n^2|\hat{k}-k^*|=O_P(1)$ and by Assumption \ref{H2}, $k^*/n\sim\theta$ and $n\Delta_n^2\rightarrow\infty$, as $n\rightarrow\infty$. This yields
\begin{equation*}
n^{-3/2}k^*|\hat{k}-k^*|\Theta_{\Delta_n}\leq  C\frac{\Delta_n^2\big|\hat{k}-k^*\big|}{n^{1/2}\Delta_n} =o_P(1).
\end{equation*}
This completes the proof of (\ref{lemma1_1}). The proof of (\ref{lemma1_2}) follows using similar argument.
\end{proof}

\subsubsection{Proof of \thref{lemma2}}\label{sec_lemma_2}
We will now state the proof of \thref{lemma2}.
\begin{proof} To prove \thref{lemma2} we will use the idea of the proof of  Theorem 3 of \citet{Dehling.2015}. \par 
Recall that $T(Y_1,\ldots,Y_{\hat{k}})=\hat{k}^{-3/2}\max_{1\leq k \leq \hat{k}}|U_{1,\hat{k}}(k)|$ and similarly $T(Y_{\hat{k}+1},\ldots,Y_{n})=(n-\hat{k})^{-3/2}\max_{\hat{k}<k\leq n}|U_{\hat{k}+1,n}(k)|$.
Note that the terms $U_{1,\hat{k}}(k)$ and $U_{\hat{k}+1,n}(k)$ defined in (\ref{define_U_Y}) can be written as a second order U-statistic
\begin{equation*}
 U_{a,b}\left(k\right) = \sum_{i=a}^{k}\sum_{j=k+1}^{b}\left(h\left(Y_i,Y_j\right)-\Theta\right), \qquad a\leq k < b,
\end{equation*}
with kernel function $h\left(x,y\right) = 1_{\left\{x\leq y\right\}}$ and constant $\Theta=\E h(Y_1',Y_2')=1/2$, where $Y_1'$ and $Y_2'$ are independent copies of $Y_1$. Furthermore, we can apply the Hoeffding's decomposition given in (\ref{hoeffding_decomposition_h}).

Therefore,
\[
U_{a,b}(k) = \sum_{i=a}^{k}\sum_{j=k+1}^{b}\left(h_1\left(Y_i\right)+h_2\left(Y_j\right)+g\left(Y_i,Y_j\right)\right)=: s_{a,b}(k) + v_{a,b}(k),
\]
where
\[
s_{a,b}(k)=(b-k)\sum_{i=a}^{k}h_1\left(Y_i\right) + (k-a+1)\sum_{j=k+1}^{b}h_2\left(Y_j\right), \qquad v_{a,b}(k)=\sum_{i=a}^{k}\sum_{j=k+1}^{b}g\left(Y_i,Y_j\right).
\]

Note that 
\[
v_{a,b}(k)= \sum_{i=1}^{k}\sum_{j=1}^{b}g\left(Y_i,Y_j\right)-\sum_{i=1}^{k}\sum_{j=1}^{k}g\left(Y_i,Y_j\right)-\sum_{i=1}^{a-1}\sum_{j=1}^{b}g\left(Y_i,Y_j\right)+\sum_{i=1}^{a-1}\sum_{j=1}^{k}g\left(Y_i,Y_j\right).
\]

Thus, \thref{g_negligible2} yields
\[
n^{-3/2}\max_{a\leq k \leq b}\big|v_{a,b}(k)\big|\leq 4n^{-3/2}\max_{1\leq k \leq n}\max_{1\leq l \leq n}\Big|  \sum_{i=1}^{k}\sum_{j=1}^{l}g\left(Y_i,Y_j\right)\Big|  =o_P(1).
\]
Furthermore, by the triangle inequality,
\[
\max_{a\leq k \leq b}\big|U_{a,b}(k)\big|= \max_{a\leq k \leq b}\big|s_{a,b}(k)\big| +\max_{a\leq k \leq b}\big|v_{a,b}(k)\big| = \max_{a\leq k \leq b}\big|s_{a,b}(k)\big| + o_P(n^{3/2}).
\]
Consistency of $\hat{k}$ in (\ref{rate_consistency_bp_estimator}), $\Delta_n^2|\hat{k}-k^*|=O_P(1)$, and Assumption \ref{H2}, $n\Delta_n^2\rightarrow\infty$, as $n\rightarrow\infty$, yield
\begin{equation}\label{consistency2}
\Big|\frac{\hat{k}}{n}-\theta\Big| = o_P(1).
\end{equation}
It remains to show that
\begin{align*}
\hat{k}^{-3/2}\max_{1\leq k\leq \hat{k}}\big|s_{1,\hat{k}}(k)\big| &\xrightarrow{d} \sigma\sup_{0\leq t \leq 1}\big|B^{(1)}\left(t\right)\big|, \\
(n-\hat{k})^{-3/2}\max_{\hat{k} < k\leq n}\big|s_{\hat{k}+1,n}(k)\big| &\xrightarrow{d} \sigma\sup_{0\leq t \leq 1}\big|B^{(2)}\left(t\right)\big|,
\end{align*}
where $B^{(1)}$ and $B^{(2)}$ are independent Brownian bridges. By Slutsky's Lemma this implies (\ref{eq_lemma2}). Note that $h_1(x)=-h_2(x)$. Hence,
\begin{align*}
s_{1,\hat{k}}(k) &= (\hat{k}-k)\sum_{i=1}^{k}h_1(Y_i) + k\sum_{j=k+1}^{\hat{k}}h_2(Y_j)\\
&= \hat{k}n^{1/2}\Big\{ \frac{1}{n^{1/2}}\sum_{i=1}^{ k}h_1\left(Y_i\right)-\frac{ k}{\hat{k}}\frac{1}{n^{1/2}}\sum_{i=1}^{\hat{k}}h_1\left(Y_i\right) \Big\} =: \hat{k}n^{1/2}\Gamma_k^{(1)}
\end{align*}
and
\begin{align*}
&s_{\hat{k}+1,n}(k) =(n-k)\sum_{i=\hat{k}+1}^{k}h_1(Y_i) + (k-\hat{k})\sum_{j=k+1}^{n}h_1(Y_j)\\
&= (n-\hat{k})n^{1/2}\Big\{\frac{1}{n^{1/2}}\sum_{i=\hat{k}+1}^{k}h_1\left(Y_i\right)-\frac{k-\hat{k}}{n-\hat{k}}\frac{1}{n^{1/2}}\sum_{i=\hat{k}+1}^{n}h_1\left(Y_i\right)\Big\}\\
&= (n-\hat{k})n^{1/2}\Big\{\frac{1}{n^{1/2}}\Big(\sum_{i=1}^{k}h_1\left(Y_i\right)-\sum_{i=1}^{\hat{k}}h_1\left(Y_i\right)\Big)-\frac{k-\hat{k}}{n-\hat{k}}\frac{1}{n^{1/2}}\Big(\sum_{i=1}^{n}h_1\left(Y_i\right)-\sum_{i=1}^{\hat{k}}h_1\left(Y_i\right)\Big)\Big\}\\
& =:(n-\hat{k})n^{1/2} \Gamma_k^{(2)}.
\end{align*}
\par

\thref{remark_h1} implies weak convergence on $D[0,1]$ of the partial sum process,
\begin{equation*}
\bigg( \frac{1}{n^{1/2}}\sum_{i=1}^{\left[nt\right]}h_1\left(Y_i\right) \bigg)_{0\leq t \leq 1} \xrightarrow{d} \left( \sigma W\left(t\right) \right)_{0\leq t \leq 1},
\end{equation*}
where $W\left(t\right)$ is a Brownian motion and $\sigma$ as in (\ref{lrv_sigma2}). By the Skorokhod-Wichura-Dudley representation (see e.g.,  \citet{Shorack.2009}, Theorem 4 on page 47) there exists a series of Brownian motions $W_n\left(t\right)$, $t\in\left[0,1\right]$, such that
\begin{equation*}
\sup_{0\leq t\leq 1}\Big| n^{-1/2}\sum_{i=1}^{\left[nt\right]}h_1\left(Y_i\right)-\sigma W_n\left(t\right)\Big| = o_P\left(1\right).
\end{equation*}
Set 
\begin{align*}
\Gamma_{W,k}^{(1)} = W_n\Big(\frac{k}{n}\Big)-\frac{k}{\hat{k}}W_n\Big(\frac{\hat{k}}{n}\Big), \quad
\Gamma_{W,k}^{(2)}= \Big(W_n\Big(\frac{k}{n}\Big)- W_n\Big(\frac{\hat{k}}{n}\Big)\Big)-\frac{k-\hat{k}}{n-\hat{k}}\Big(W_n(1)-W_n\Big(\frac{\hat{k}}{n}\Big)\Big),
\end{align*}
and note that $\Gamma_{W,k}^{(1)}$ and $\Gamma_{W,k}^{(2)}$ are independent, since the increments of Brownian motions are independent.\\
Thus,
\begin{equation*}
\max_{1\leq k \leq \hat{k}}\big| \Gamma_{k}^{(1)}- \sigma\Gamma_{W,k}^{(1)}\big| = o_P(1), \qquad \max_{\hat{k}< k \leq n}\big| \Gamma_{k}^{(2)}- \sigma\Gamma_{W,k}^{(2)}\big| = o_P(1).
\end{equation*}
By (\ref{consistency2}) and by the a.s. equicontinuity of the Brownian motion process $\{W_n\}$ and using the continuous mapping theorem, $\big| W_n\big(\hat{k}/n\big)-W_n\left(\theta\right)\big| = o_P\left(1\right)$. Hence,
\begin{equation*}
\max_{1\leq k\leq \hat{k}}\big| \Gamma_{W,k}^{(1)}\big| = \sup_{0\leq t\leq \theta}\Big| W_n\left(t\right)-\frac{t}{\theta}W_n\left(\theta\right)\Big| + o_P\left(1\right)
\end{equation*}
and
\begin{align*}
\max_{\hat{k}< k\leq n}\big| \Gamma_{W,k}^{(2)}\big| &= \sup_{\theta < t\leq 1}\Big| \Big(W_n\left(t\right)-W_n\left(\theta\right)\Big)-\frac{t-\theta}{1-\theta}\Big(W_n\left(1\right)-W_n\left(\theta\right)\Big)\Big| + o_P\left(1\right)\\
& \overset{d}{=}\sup_{\theta < t\leq 1}\Big| W_n\left(t-\theta\right)-\frac{t-\theta}{1-\theta}W_n\left(1-\theta\right)\Big|,
\end{align*}
since Brownian motions have stationary increments and $W_n(0)=0$.
Finally,
\begin{multline*}
(\hat{k}/n)^{-1/2}\max_{1\leq k \leq \hat{k}}\big| \Gamma_{k}^{(1)}\big|
 = \frac{\sigma}{\theta^{1/2}}\sup_{0\leq t\leq \theta}\Big| W_n\left(t\right)-\frac{t}{\theta}W_n\left(\theta\right)\Big|+o_P\left(1\right)\overset{d}{=}\sigma\sup_{0\leq t\leq 1} \big| B^{(1)}\left(t\right)\big|,
\end{multline*}
since Brownian motions are scale invariant, i.e. $\theta^{-1/2}W_n(t)\overset{d}{=}W_n(t/\theta)$, and
\begin{multline*}
((n-\hat{k})/n)^{-1/2}\max_{\hat{k}< k \leq n }\big| \Gamma_{k}^{(2)}\big|\overset{d}{=} \frac{\sigma}{\left(1-\theta\right)^{1/2}}\sup_{\theta<t\leq 1}\Big| W_n\left(t-\theta\right)-\frac{t-\theta}{1-\theta}W_n\left(1-\theta\right)\Big|\\
\overset{d}{=} \frac{\sigma}{\left(1-\theta\right)^{1/2}}\sup_{0<t\leq 1-\theta}\Big| W_n\left(t\right)-\frac{t}{1-\theta}W_n\left(1-\theta\right)\Big|
\overset{d}{=}\sigma\sup_{0\leq t\leq 1} \big| B^{(2)}\left(t\right)\big|.
\end{multline*}
The increments of Brownian motions are independent, thus $B^{(1)}$ and $B^{(2)}$ are independent. This proves the lemma.
\end{proof}

\subsection{Proof of \thref{theorem_alternative}}\label{Proof_alternative}

Under the alternative we consider observations $X_1,\ldots,X_n$ with $X_i=G(\xi_i)+\mu$, $i=1,\ldots,n$. Note that the indicator function $1_{\{x\leq y\}}$ is invariant under strictly increasing functions, i.e. $1_{\{G(\xi_i)\leq G(\xi_j)\}}=1_{\{\xi_i\leq \xi_j\}}$, if $G$ is strictly increasing. For $G$ being a strictly decreasing function, observe that $1_{\{G(\xi_i)\leq G(\xi_j)\}} = 1-1_{\{\xi_i\leq \xi_j\}}$. Therefore, for $G$ being strictly monotone,
\[
\Big|\sum_{i=1}^{k}\sum_{j=k+1}^{n}(1_{\{X_i\leq X_j\}}-1/2)\Big| = \Big|\sum_{i=1}^{k}\sum_{j=k+1}^{n}(1_{\{\xi_i\leq \xi_j\}}-1/2)\Big|.
\]
Thus, to prove \thref{theorem_alternative} it is sufficient to consider $T_{n,1}$ and $T_{n,2}$ in (\ref{T_n1}), (\ref{T_n2}) applied to the stationary Gaussian process $(\xi_j)$, i.e. $T_{n,1}(\xi_1,\ldots,\xi_{\hat{k}})$ and $T_{n,2}(\xi_{\hat{k}+1},\ldots,\xi_n)$,  instead of $T_{n,1}(X_1,\ldots,X_{\hat{k}})$ and $T_{n,2}(X_{\hat{k}+1},\ldots,X_n)$.\par \bigskip

Before we prove that the test $M_n$ tends to infinity in probability under the alternative, we will consider the limit distribution of $T_{n,1}(\xi_1,\ldots,\xi_{\hat{k}})$ and $T_{n,2}(\xi_{\hat{k}+1},\ldots,\xi_n)$ in \thref{lemma_ld_alternative}, using a different normalization $n^{d+3/2}c_d$, where $c_d^2 = \frac{c_0}{d(2d+1)}$, $c_0>0$. Note that in the following we always assume $d\in(0,1/2)$. By $(W_H(t))_{0\leq t \leq 1}$ we denote a fractional Brownian motion process with Hurst parameter $H=d+1/2$, that is a mean zero Gaussian process with auto-covariances $\Cov(W_H(t),W_H(s))= (t^{2H}+s^{2H}-|t-s|^{2H})/2$. 

\begin{lemma}\thlabel{lemma_konv_alternative}
Assume that the assumptions of \thref{theorem_alternative} hold. Then, for $0\leq s\leq t\leq 1$,
\begin{equation*}
\frac{1}{n^{d+3/2}c_d}\sum_{i=1}^{[ns]}\sum_{j=[nt]+1}^n (1_{\{\xi_i\leq \xi_j\}}-1/2) \xrightarrow{d} \frac{1}{2\sqrt{\pi}}\Big(s(W_H(1)-W_H(t))-(1-t)W_H(s)\Big),
\end{equation*}
where $W_H$, $H=d+1/2$ is a standard fractional Brownian motion, $c_d^2 = \frac{c_0}{d(2d+1)}$, $c_0>0$ and $d\in(0,1/2)$.
\end{lemma}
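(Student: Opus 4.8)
The plan is to use that, under the alternative, $(\xi_j)$ is a stationary long-memory \emph{Gaussian} sequence, so its marginal distribution function is $F=\Phi$, the standard normal one, and to reduce the double sum to partial sums of a Hermite-rank-one functional of $(\xi_j)$, for which a fractional Brownian motion limit is available. Throughout, let $\phi=\Phi'$ denote the standard normal density. First I would apply the same Hoeffding decomposition as in \thref{lemma2} to the kernel $h(x,y)=1_{\{x\le y\}}-1/2$, writing $h(x,y)=h_1(x)+h_2(y)+g(x,y)$ with $h_1(x)=1/2-\Phi(x)$, $h_2(y)=\Phi(y)-1/2=-h_1(y)$, and $g$ the degenerate remainder. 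This splits the double sum as
\[
\sum_{i=1}^{[ns]}\sum_{j=[nt]+1}^{n}\bigl(1_{\{\xi_i\le\xi_j\}}-1/2\bigr)=(n-[nt])\sum_{i=1}^{[ns]}h_1(\xi_i)+[ns]\sum_{j=[nt]+1}^{n}h_2(\xi_j)+\sum_{i=1}^{[ns]}\sum_{j=[nt]+1}^{n}g(\xi_i,\xi_j).
\]
The strategy is to show that the two linear terms produce the stated limit and that the degenerate double sum is $o_P(n^{d+3/2})$.

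For the linear terms I would invoke the reduction principle for functionals of long-range dependent Gaussian sequences. Here $h_1$ has Hermite rank one: $\E h_1(\xi_1)=0$ and its first Hermite coefficient is
\[
\E\bigl[h_1(\xi_1)\xi_1\bigr]=-\E\bigl[\Phi(\xi_1)\xi_1\bigr]=-\int_{\mathbb{R}}\phi^2(u)\,du=-\frac{1}{2\sqrt{\pi}},
\]
the middle equality by integration by parts. Combined with the fractional Brownian motion limit $\frac{1}{n^{d+1/2}c_d}\sum_{i=1}^{[nu]}\xi_i\xrightarrow{d}W_H(u)$ in $D[0,1]$ (both as in \citet{Dehling.2013}), the reduction principle gives
\[
\Bigl(\frac{1}{n^{d+1/2}c_d}\sum_{i=1}^{[nu]}h_1(\xi_i)\Bigr)_{0\le u\le 1}\xrightarrow{d}\Bigl(-\tfrac{1}{2\sqrt{\pi}}W_H(u)\Bigr)_{0\le u\le 1}.
\]
Taking the two required increments and using $n^{d+3/2}c_d=n\cdot n^{d+1/2}c_d$, $(n-[nt])/n\to 1-t$ and $[ns]/n\to s$, the first two terms divided by $n^{d+3/2}c_d$ converge jointly (being a.s.-continuous functionals of a single limit process with continuous paths) to $-\frac{1}{2\sqrt{\pi}}(1-t)W_H(s)$ and $\frac{1}{2\sqrt{\pi}}s\bigl(W_H(1)-W_H(t)\bigr)$, whose sum is exactly the asserted limit.

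It then remains to establish $\frac{1}{n^{d+3/2}c_d}\sum_{i=1}^{[ns]}\sum_{j=[nt]+1}^{n}g(\xi_i,\xi_j)=o_P(1)$, which I expect to be the main obstacle. Expanding $g$ in bivariate Hermite polynomials, degeneracy ($\E g(x,\xi_1)=\E g(\xi_1,y)=0$) forces only products $H_p(\xi_i)H_q(\xi_j)$ with $p,q\ge 1$, and a symmetry argument shows that the $(1,1)$-coefficient vanishes, $\E\bigl[1_{\{\xi_1\le\xi_1'\}}\xi_1\xi_1'\bigr]=0$ for independent standard normals $\xi_1,\xi_1'$. The decisive order count is that the potential leading $(1,1)$-contribution is a product of two disjoint-block partial sums of a Hermite-rank-one functional, hence $O_P\bigl((n^{d+1/2})^2\bigr)=O_P(n^{2d+1})$, and $n^{2d+1}=o(n^{d+3/2})$ exactly because $d<1/2$; partial sums of Hermite polynomials of degree $\ge 2$ are of still smaller order. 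I would make this rigorous by bounding $\E\bigl(\sum_{i=1}^{[ns]}\sum_{j=[nt]+1}^{n}g(\xi_i,\xi_j)\bigr)^2=o(n^{2d+3})$ through the covariance decay $\gamma_k\sim c_0 k^{2d-1}$ and the diagram formula, term by term in the Hermite expansion of $g$, and then conclude by Chebyshev's inequality. Finally, Slutsky's lemma combines the two steps and proves the lemma; the restriction $d\in(0,1/2)$ and the Gaussian long-memory structure enter precisely in the negligibility of the degenerate part and in the underlying reduction principle, following \citet{Dehling.2013}.
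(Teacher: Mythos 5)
Your proposal is correct in substance, but it takes a genuinely different route from the paper. The paper (following the proof of Theorem 1 of \citet{Dehling.2013}) rewrites the double sum in terms of the empirical distribution functions $F_{[ns]}$ and $F_{[nt]+1,n}$, integrates by parts, and then applies the empirical process non-central limit theorem of \citet{Dehling.1989} together with the Skorohod--Dudley--Wichura representation and the ergodic theorem; the Hermite-rank computation ($m=1$, $J_1(x)=-\varphi(x)$, $\int_{\mathbb{R}} J_1\,dF=-\tfrac{1}{2\sqrt{\pi}}$) enters only at the end, and matches the constant you obtain via $\E[h_1(\xi_1)\xi_1]$. You instead use the Hoeffding decomposition, treat the linear part by the reduction principle for Hermite-rank-one functionals (this part of your argument is complete, and your sign and normalization bookkeeping, including $c_d^2=c_0/(d(2d+1))$, reproduces the stated limit exactly), and attack the degenerate remainder directly by a second-moment bound. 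Your structural checks are right: degeneracy kills all $(p,0)$ and $(0,q)$ coefficients, $\E[1_{\{\xi\le\xi'\}}\xi\xi']=0$ indeed (since $\int_{-\infty}^{y}x\varphi(x)\,dx=-\varphi(y)$ and $y\varphi^2(y)$ is odd), and the decisive count $n^{2d+1}=o(n^{d+3/2})$ holds precisely for $d<1/2$; moreover the claimed bound $\E\bigl(\sum_i\sum_j g(\xi_i,\xi_j)\bigr)^2=o(n^{2d+3})$ is true (the worst diagrams are of order $n^3\vee n^{6d+1}$, both negligible on $d\in(0,1/2)$). But be aware that this is the technical crux and you have only sketched it: the bivariate Hermite system is not orthogonal under the joint law because $\xi_i$ and $\xi_j$ inside a single kernel are themselves correlated, and summing the diagram bounds uniformly over all Hermite orders $(p,q)$ requires something like an Arcones-type covariance inequality rather than a naive term-by-term estimate. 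What the paper's route buys is precisely avoiding this issue --- the uniform reduction principle behind the Dehling--Taqqu empirical-process theorem absorbs the degenerate part wholesale --- and it additionally yields almost-sure convergence uniformly in $(s,t)$ after the Skorohod representation, which is the form actually used downstream in \thref{lemma_ld_alternative}. What your route buys is a more elementary, self-contained proof of the lemma as literally stated (fixed $(s,t)$, via Chebyshev); upgrading it to the process level would further require a maximal inequality for the degenerate double sum.
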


In the proof of \thref{lemma_konv_alternative} we apply the empirical process non-central limit theorem  of \citet{Dehling.1989}, which uses the Hermite expansion of $1_{\{G(\xi)\leq x\}}-F(x)$. Before proceeding to the proof, we will have a brief look at this concept.\par \bigskip

\textbf{Hermite expansion:} Since function $g(\xi)=1_{\{G(\xi)\leq x\}}-F(x)$ is a measurable function with $\E g(\xi)=0$ and $\E g^2(\xi)<\infty$, $\xi\sim N(0,1)$, i.e. $g\in L^2(\mathbb{R},N)$, we could represent $g$ by its Hermite expansion
\[
g(\xi) = \sum_{i=1}^{\infty}\frac{J_k(x)}{k!}H_k(\xi),
\]
where the equality means convergence in the $L^2$ sense. The $k$-th order Hermite polynomial is given by
\[
H_k(\xi) = (-1)^{k}e^{\xi^2/2}\frac{d^k}{d\xi^k}e^{-\xi^2/2},
\] 
and the coefficients are given by $J_k(x) = \E (1_{\{G(\xi)\leq x\}}H_k(\xi))$, with $J_1(x)=\E(\xi_1 1_{\{\xi_1\leq x\}})=-\varphi(x)$, where $\varphi(x)$ denotes the standard normal density function.
The Hermite rank is defined as $m = \min\{k\geq 0: J_k\neq 0\}$,
the smallest $k$ for which the term in the Hermite expansion is not zero. Since $J_1(x)\neq 0$ for some $x\in \mathbb{R}$, we have Hermite rank $m=1$.\par \bigskip

\textbf{Hermite process:} The limit process $Z_m(t)$ in Theorem 1.1 of \citet{Dehling.1989} is called $m$-th order Hermite process and is defined e.g. in \citet{Taqqu.1978}. If $m=1$, $Z_1(t)$ is the standard Gaussian fractional Brownian motion.\par \bigskip

\begin{proof}[Proof of \thref{lemma_konv_alternative}]
\citet{Dehling.2013} have shown in their Theorem 1 that 
\begin{multline*}
\bigg(\frac{1}{n^{d+3/2}c_d}\sum_{i=1}^{[ns]}\sum_{j=[ns]+1}^n (1_{\{X_i\leq X_j\}}-1/2)\bigg)_{0\leq s \leq 1}\\ \xrightarrow{d} \bigg(\frac{1}{m!}(Z_m(s)-s Z_m(1))\int_{\mathbb{R}}J_m(x)dF(x)\bigg)_{0\leq s \leq 1}
\end{multline*}
for $X_i=G(\xi_i)$, where $G:\mathbb{R}\rightarrow\mathbb{R}$ is a measurable function (that might not be strictly monotone), $F$ is the continuous distribution of $X_i$, $m$ is the Hermite rank of the class functions $1_{\{G(\xi_i)\leq x\}}-F(x)$, and $J_m(x)$, $H_m$ and $(Z_m(s))_{s\in[0,1]}$ are given above. \par \bigskip
Following the proof of Theorem 1 of \citet{Dehling.2013} we will show 
\begin{multline}\label{zz_th_1}
\bigg(\frac{1}{n^{d+3/2}c_d}\sum_{i=1}^{[ns]}\sum_{j=[nt]+1}^n (1_{\{X_i\leq X_j\}}-1/2)\bigg)_{0\leq s \leq t\leq 1}\\ \xrightarrow{d} \bigg(\frac{1}{m!}\big((1-t)Z_m(s)-s (Z_m(1)-Z_m(t)\big)\int_{\mathbb{R}}J_m(x)dF(x)\bigg)_{0\leq s \leq t\leq 1}.
\end{multline}
Since $F$ is a continuous distribution function, $\int_{\mathbb{R}}F(x)dF(x)=1/2$. Denote $F_k(x)=\frac{1}{k}\sum_{i=1}^k1_{\{X_i\leq x\}}$ and $F_{k+1,n}(x)=\frac{1}{n-k}\sum_{i=k+1}^n1_{\{X_i\leq x\}}$. 
Then, 
\begin{align*}
\sum_{i=1}^{[ns]}\sum_{j=[nt]+1}^n (1_{\{X_i\leq X_j\}}-1/2) 
&=[ns](n-[nt])\Big( \int_{\mathbb{R}}\big(F_{[ns]}(x)-F(x)\big)dF_{[nt]+1,n}(x)\Big)\\
& + [ns](n-[nt])\Big( \int_{\mathbb{R}}F(x)d\big(F_{[nt]+1,n}-F\big)(x) \Big).
\end{align*}
Integration by parts yields,
\begin{equation*}
\int_{\mathbb{R}}F(x)d\big(F_{[nt]+1,n}-F\big)(x)  = - \int_{\mathbb{R}}\big(F_{[nt]+1,n}-F\big)(x)dF(x).
\end{equation*}
Hence,
\begin{multline*}
\sum_{i=1}^{[ns]}\sum_{j=[nt]+1}^n (1_{\{X_i\leq X_j\}}-1/2)= [ns](n-[nt])\int_{\mathbb{R}}(F_{[ns]}(x)-F(x))dF_{[nt]+1,n}(x)\\ -[ns](n-[nt])\int_{\mathbb{R}}(F_{[nt]+1,n}(x)-F(x))dF(x).
\end{multline*} 
With the same argument as used in \citet{Dehling.2013}, we show that
\begin{align*}
\bigg(\frac{[ns](n-[nt])}{n^{d+3/2}c_d}\int_{\mathbb{R}}(F_{[ns]}(x)-F(x))dF_{[nt]+1,n}(x)\bigg)_{0\leq s \leq t\leq 1}\\ \xrightarrow{d} \bigg(\frac{(1-t)}{m!}\int_{\mathbb{R}}J_m(x)Z_m(s)dF(x)\bigg)_{0\leq s \leq t\leq 1},
\end{align*}
and
\begin{align*}
\bigg(\frac{[ns](n-[nt])}{n^{d+3/2}c_d}\int_{\mathbb{R}}(F_{[nt]+1,n}(x)-F(x))dF(x)\bigg)_{0\leq s \leq t\leq 1}\\ \xrightarrow{d} \bigg(\frac{s}{m!}\int_{\mathbb{R}}J_m(x)(Z_m(1)-Z_m(t))dF(x)\bigg)_{0\leq s \leq t\leq 1}.
\end{align*}
We do this by applying the Skorohod-Dudley-Wichura representation which yields almost sure convergence, i.e.
\begin{align}
\frac{[ns](n-[nt])}{n^{d+3/2}c_d}&\int_{\mathbb{R}}(F_{[ns]}(x)-F(x))dF_{[nt]+1,n}(x) - \frac{(1-t)}{m!}\int_{\mathbb{R}}J_m(x)Z_m(s)dF(x) \rightarrow 0\label{as_conv1}\\
\frac{[ns](n-[nt])}{n^{d+3/2}c_d}&\int_{\mathbb{R}}(F_{[nt]+1,n}(x)-F(x))dF(x) - \frac{s}{m!}\int_{\mathbb{R}}J_m(x)(Z_m(1)-Z_m(t))dF(x)\rightarrow 0,\label{as_conv2}
\end{align}
almost surely, uniformly in $0<s\leq t <1$. \par \bigskip
Let us start with (\ref{as_conv1}). We can write
\begin{align}
&\frac{[ns](n-[nt])}{n^{d+3/2}c_d}\int_{\mathbb{R}}(F_{[ns]}(x)-F(x))dF_{[nt]+1,n}(x) - \frac{(1-t)}{m!}\int_{\mathbb{R}}J_m(x)Z_m(s)dF(x)\nonumber\\
 &= \frac{(n-[nt])}{n}\int_{\mathbb{R}}\frac{[ns]}{n^{d+1/2}c_d}(F_{[ns]}(x)-F(x))dF_{[nt]+1,n}(x)- (1-t)\int_{\mathbb{R}}J_m(x)\frac{Z_m(s)}{m!}dF(x)\nonumber\\
& = \frac{(n-[nt])}{n}\int_{\mathbb{R}}\Big(\frac{[ns]}{n^{d+1/2}c_d}(F_{[ns]}(x)-F(x))-J_m(x)\frac{Z_m(s)}{m!}\Big)dF_{[nt]+1,n}(x)\nonumber\\
&+  \frac{(n-[nt])}{n}\int_{\mathbb{R}}J_m(x)\frac{Z_m(s)}{m!}d\big(F_{[nt]+1,n}-F\big)(x)\nonumber\\
& + \Big(\frac{(n-[nt])}{n} - (1-t) \Big)\int_{\mathbb{R}}J_m(x)\frac{Z_m(s)}{m!}dF(x).\label{rs_as_conv1}
\end{align}
The empirical process non-central limit theorem of \citet{Dehling.1989} yields
\[
\Big(d_n^{-1}[ns]\big(F_{[ns]}(x)-F(x)\big) \Big)_{x\in[-\infty,\infty],s\in[0,1]} \xrightarrow{d} \Big(J(x)Z(s)\Big)_{x\in[-\infty,\infty],s\in[0,1]},
\]
where $J(x)=J_m(x)$, $Z(x)=Z_m(x)/m!$ and $d_n^2\sim n^{2d+1}c_d^2$.\par 
\citet{Dehling.2013} argue that applying the Skorohod-Dudley-Wichura representation yields almost sure convergence, i.e.
\begin{equation}\label{as_conv_sdw}
\sup_{s,x}\big|d_n^{-1}[ns]\big(F_{[ns]}(x)-F(x)\big)-J(x)Z(x)\big| \rightarrow 0\qquad \text{a.s.}
\end{equation}
Thus, the first term on the right-hand side of (\ref{rs_as_conv1}) converges to 0 almost surely, uniformly in $0<s\leq t <1$.\par 
Furthermore, we note that 
\begin{align*}
\frac{(n-[nt])}{n}&\int_{\mathbb{R}}J(x)Z(s)d\big(F_{[nt]+1,n}-F\big)(x)\\
&= Z(s)\Big[ \frac{(n-[nt])}{n}\int_{\mathbb{R}}J(x)dF_{[nt]+1,n}(x)-\frac{(n-[nt])}{n}\int_{\mathbb{R}}J(x)dF(x) \Big]\\
& =  Z(s)\Big[\frac{1}{n}\sum_{i=[nt]+1}^{n}J(X_i)- \frac{(n-[nt])}{n}\E(J(X_i))\Big]\\
&= Z(s) \frac{1}{n}\sum_{i=1}^{n} \big( J(X_i)-\E(J(X_i))\big) -Z(s) \frac{1}{n}\sum_{i=1}^{[nt]} \big( J(X_i)-\E(J(X_i))\big).
\end{align*}
Note that $(J(X_i))$ is ergodic since the process $(X_i)$ is ergodic and $J$ is a measurable function. By the ergodic theorem, $\frac{1}{n}\sum_{i=1}^{n} \big( J(X_i)-\E(J(X_i))\big) \rightarrow 0$ almost surely. This implies that $\sum_{i=1}^{n} \big( J(X_i)-\E(J(X_i))\big)= o(n)$ and hence
\[
\max_{0\leq k \leq n}\Big|\sum_{i=1}^{k} \big( J(X_i)-\E(J(X_i))\big)\Big|=o(n)
\]
almost surely as $n\rightarrow \infty$. Thus, $\frac{1}{n}\sum_{i=1}^{[nt]} \big( J(X_i)-\E(J(X_i))\big) \rightarrow 0$ almost surely for all $0\leq t \leq 1$. Therefore, the second term on the right-hand side of (\ref{rs_as_conv1}) converges to 0 almost surely, uniformly in $0<s\leq t <1$.\par
Also the third term on the right-hand side of (\ref{rs_as_conv1}) converges to 0, since, as $n\rightarrow\infty$, $
\big((n-[nt])/n - (1-t)\big) \rightarrow 0$, and $\int_{\mathbb{R}}J_m(x)\frac{Z_m(s)}{m!}dF(x)$ is bounded. This finishes the proof of (\ref{as_conv1}).\par 
Note that
\[
F_{[nt]+1,n}(x) = \frac{n}{n-[nt]}F_{n}(x)-\frac{[nt]}{n-[nt]}F_{[nt]}(x),
\]
and hence,
\[
(n-[nt])\big(F_{[nt]+1,n}(x)-F(x) \big) = n\big(F_n(x)-F(x) \big)-[nt]\big(F_{[nt]}(x)-F(x)\big).
\]
Then the proof of (\ref{as_conv2}) follows using again (\ref{as_conv_sdw}). Thus, (\ref{zz_th_1}) is shown.\par 
Note that this result holds for $X_i=G(\xi_i)$, but in our lemma we consider $X_i=\xi_i$, where $(\xi_j)$ is a stationary mean zero Gaussian process with auto-covariances $\gamma_k\sim k^{2d-1}c_0$, $d\in(0,1/2)$. In this case, $J_1(x)=-\varphi(x)$, where $\varphi(x)$ denotes the standard normal density function and $\int_{\mathbb{R}}J_1(x)dF(x)=-\frac{1}{2\sqrt{\pi}}$, since $F$ is the normal distribution function. Furthermore, $J_1(x)\neq 0$ for all $x$ and hence, we have Hermite rank $m=1$. Therefore, $(Z_1(s))$ denotes the standard fractional Brownian motion process $(W_H(s))$. Thus, the limit in (\ref{zz_th_1}) equals
\[
\frac{1}{2\sqrt{\pi}}\Big(s(W_H(1)-W_H(t))-(1-t)W_H(s)\Big),
\]
which proves the lemma.
\end{proof}
\par\bigskip
\begin{lemma}\thlabel{lemma_ld_alternative}
Assume that the assumptions of \thref{theorem_alternative} hold. Then,
\begin{align*}
&\bigg[\frac{1}{n^{d+3/2}c_d}\max_{1\leq k\leq \hat{k}}\Big|\sum_{i=1}^{k}\sum_{j=k+1}^{\hat{k}}(1_{\{\xi_i\leq \xi_j\}}-1/2)\Big|, \frac{1}{n^{d+3/2}c_d}\max_{\hat{k}< k\leq n}\Big|\sum_{i=\hat{k}+1}^{k}\sum_{j=k+1}^{n}(1_{\{\xi_i\leq \xi_j\}}-1/2)\Big| \bigg]\\
&\overset{d}{\longrightarrow} \bigg[\frac{\zeta}{2\sqrt{\pi}}\sup_{0\leq t\leq \zeta}\big|W_H(t)-\frac{t}{\zeta}W_H(\zeta)\big|,\\
&\qquad\qquad\qquad\qquad\qquad\qquad\frac{1-\zeta}{2\sqrt{\pi}}\sup_{\zeta\leq t\leq 1}\big|W_H(t)-W_H(\zeta)-\frac{t-\zeta}{1-\zeta}(W_H(1)-W_H(\zeta))\big| \bigg],
\end{align*}
where $c_d^2 = \frac{c_0}{d(2d+1)}$, $c_0>0$, $d\in(0,1/2)$, $W_H$ is a standard fractional Brownian motion, $H=d+1/2$ and
\begin{equation}\label{xi}
\zeta = \inf\Big\{ t\geq 0: \sup_{0\leq s\leq 1}|W_H(s)-s W_H(1)|=|W_H(t)-t W_H(1)|\Big\}.
\end{equation}
\end{lemma}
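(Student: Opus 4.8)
The plan is to derive the joint convergence of the two subsample maxima from the two-parameter functional convergence already contained in \thref{lemma_konv_alternative}, coupled with the convergence of the rescaled estimator $\hat{k}/n$ to $\zeta$, and then to invoke continuity of the relevant supremum functionals. Write
\[
G_n(s,t) = \frac{1}{n^{d+3/2}c_d}\sum_{i=1}^{[ns]}\sum_{j=[nt]+1}^n (1_{\{\xi_i\leq \xi_j\}}-1/2), \qquad 0\leq s\leq t\leq 1,
\]
and let $G(s,t) = \frac{1}{2\sqrt{\pi}}\big(s(W_H(1)-W_H(t))-(1-t)W_H(s)\big)$ be its limit from \thref{lemma_konv_alternative}. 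Setting $\hat{\zeta}:=\hat{k}/n$ and splitting the inner range of summation, the first subsample statistic rewrites, for $k=[ns]$, as $G_n(s,s)-G_n(s,\hat{\zeta})$ and the second as $G_n(s,s)-G_n(\hat{\zeta},s)$. Hence the two maxima in the claim equal $\sup_{0\leq s\leq \hat{\zeta}}|G_n(s,s)-G_n(s,\hat{\zeta})|$ and $\sup_{\hat{\zeta}\leq t\leq 1}|G_n(t,t)-G_n(\hat{\zeta},t)|$, so the whole problem is reduced to passing to the limit in these two functionals of the surface $G_n$ and the random split level $\hat{\zeta}$.

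First I would upgrade \thref{lemma_konv_alternative} to uniform convergence of the whole surface $(G_n(s,t))_{0\le s\le t\le 1}$ to $(G(s,t))$. This is essentially already in its proof, since the empirical-process input (\ref{as_conv_sdw}) converges uniformly almost surely on the Skorokhod--Dudley--Wichura representation space, and the maps $(s,t)\mapsto G_n(s,t)$ inherit that uniformity. Working on this space, I then identify the limit of $\hat{\zeta}$: the diagonal $G_n(s,s)$ is the normalized Wilcoxon statistic $n^{-(d+3/2)}c_d^{-1}W_{1,n}([ns])$ and converges uniformly to $G(s,s)=-\frac{1}{2\sqrt{\pi}}(W_H(s)-sW_H(1))$, a rescaled fractional Brownian bridge. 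As $\hat{k}$ is the smallest maximizer of $|W_{1,n}(\cdot)|$, $\hat{\zeta}$ is the smallest maximizer of $|G_n(s,s)|$, and an argmax continuous-mapping argument gives $\hat{\zeta}\to\zeta$, with $\zeta$ the smallest maximizer of $|W_H(t)-tW_H(1)|$ in (\ref{xi}), provided that maximizer is almost surely unique.

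With $\hat{\zeta}\to\zeta$ and $G_n\to G$ uniformly in hand, the two suprema converge by continuity of the maps $(\varphi,u)\mapsto \sup_{0\le s\le u}|\varphi(s,s)-\varphi(s,u)|$ and $(\varphi,u)\mapsto \sup_{u\le t\le 1}|\varphi(t,t)-\varphi(u,t)|$ at continuous $\varphi$, together with uniform continuity of $G$ on the compact triangle. A direct computation gives $G(s,s)-G(s,\zeta)=\frac{1}{2\sqrt{\pi}}\big(sW_H(\zeta)-\zeta W_H(s)\big)$ and $G(t,t)-G(\zeta,t)=\frac{1}{2\sqrt{\pi}}\big((t-\zeta)W_H(1)-(1-\zeta)W_H(t)+(1-t)W_H(\zeta)\big)$. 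Taking absolute values and suprema over $[0,\zeta]$ and $[\zeta,1]$ respectively, and factoring out $\zeta$ and $1-\zeta$, reproduces exactly the claimed limits $\frac{\zeta}{2\sqrt{\pi}}\sup_{0\le t\le\zeta}\big|W_H(t)-\frac{t}{\zeta}W_H(\zeta)\big|$ and $\frac{1-\zeta}{2\sqrt{\pi}}\sup_{\zeta\le t\le 1}\big|W_H(t)-W_H(\zeta)-\frac{t-\zeta}{1-\zeta}(W_H(1)-W_H(\zeta))\big|$. Since almost sure convergence on the representation space implies convergence in distribution, the lemma follows.

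The main obstacle I anticipate is the joint treatment of the random split level $\hat{\zeta}$: because $\zeta$ is a functional of the same limiting fractional Brownian motion that drives $G$, the two convergences $\hat{\zeta}\to\zeta$ and $G_n\to G$ cannot be decoupled and must be obtained on a single probability space, which is why I route everything through the Skorokhod representation and one almost-sure statement. Within this, the delicate point is the continuity of the argmax functional, i.e.\ showing that the fractional Brownian bridge $W_H(\cdot)-\cdot\,W_H(1)$ attains the maximum of its absolute value at a unique point almost surely, so that the argmax map is continuous at the limit; this is a standard but nontrivial property of continuous Gaussian processes that I would cite rather than reprove.
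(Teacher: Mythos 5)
Your proposal is correct and follows essentially the same route as the paper: both express the two maxima as supremum functionals of the two-parameter process $\tilde{U}_n(s,t)$ from \thref{lemma_konv_alternative}, note that the same mapping sends $\tilde{U}_n$ to $(\hat{k}/n,\tilde{U}_{n,1},\tilde{U}_{n,2})$ and the limit surface to $(\zeta,\tilde{W}_{H,1},\tilde{W}_{H,2})$, and conclude by substituting the random split level via $(z,x(t),y(t))\mapsto(x(z),y(z))$. Your only real deviation is presentational and in fact slightly more careful: you route the argument through a Skorokhod representation and explicitly flag that continuity of the argmax functional requires almost-sure uniqueness of the maximizer of $|W_H(t)-tW_H(1)|$, a point the paper's appeal to the continuous mapping theorem leaves implicit.
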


\begin{proof}
Denote for $0\leq s\leq t \leq 1$
\begin{align*}
\tilde{U}_{n}(s,t) &=  \frac{1}{n^{d+3/2}c_d}\sum_{i=1}^{[ns]}\sum_{j=[nt]+1}^n (1_{\{\xi_i\leq \xi_j\}}-1/2),\\
 \tilde{W}_H(s,t) &= -\frac{1}{2\sqrt{\pi}}\big((1-t)W_H(s)-s(W_H(1)-W_H(t))\big)
\end{align*}
and note that by \thref{lemma_konv_alternative}, $(\tilde{U}_{n}(s,t))_{s,t}\xrightarrow{d}(\tilde{W}_{H}(s,t))_{s,t}$. Furthermore, we denote 
\begin{align*}
\tilde{U}_{n,1}(t) &= \frac{1}{n^{d+3/2}c_d}\max_{1\leq k\leq nt}\Big|\sum_{i=1}^{k}\sum_{j=k+1}^{nt}(1_{\{\xi_i\leq \xi_j\}}-1/2)\Big|,\\
\tilde{U}_{n,2}(t) &=\frac{1}{n^{d+3/2}c_d}\max_{nt< k\leq n}\Big|\sum_{i=nt+1}^{k}\sum_{j=k+1}^{n}(1_{\{\xi_i\leq \xi_j\}}-1/2)\Big|,\\
\tilde{W}_{H,1}(t) &= \frac{t}{2\sqrt{\pi}}\sup_{0\leq s\leq t}\big|W_H(s)-\frac{s}{t}W_H(t)\big|,\\
\tilde{W}_{H,2}(t) &= \frac{1-t}{2\sqrt{\pi}}\sup_{t\leq s\leq 1}\big|(W_H(s)-W_H(t))-\frac{1-s}{1-t}(W_H(1)-W_H(t))\big|.
\end{align*}
Since 
\[
\sum_{i=1}^{k}\sum_{j=k+1}^{nt}(1_{\{\xi_i\leq \xi_j\}}-1/2) = \sum_{i=1}^{k}\sum_{j=k+1}^{n}(1_{\{\xi_i\leq \xi_j\}}-1/2)-\sum_{i=1}^{k}\sum_{j=nt+1}^{n}(1_{\{\xi_i\leq \xi_j\}}-1/2),
\]
we can write $\tilde{U}_{n,1}(t)=\sup_{0\leq s\leq t}|\tilde{U}_{n}(s,s)-\tilde{U}_{n}(s,t)|$ and with a similar argument $\tilde{U}_{n,2}(t)=\sup_{t\leq s\leq 1}|\tilde{U}_{n}(s,s)-\tilde{U}_{n}(t,s)|$. Note that $\tilde{W}_{H,1}(t)=\sup_{0\leq s\leq t}|\tilde{W}_{H}(s,s)-\tilde{W}_{H}(s,t)|$ and $\tilde{W}_{H,2}(t)=\sup_{t\leq s\leq 1}|\tilde{W}_{H}(s,s)-\tilde{W}_{H}(t,s)|$. Thus, the same continuous mapping transforms $\tilde{U}_{n}(s,t)$ into the vector $(\hat{k}/n,\tilde{U}_{n,1}(t),\tilde{U}_{n,2}(t))$ and $\tilde{W}_H(s,t)$ into $(\zeta,\tilde{W}_{H,1}(t),\tilde{W}_{H,2}(t))$, where $\zeta$ is given in (\ref{xi}). Hence, by the continuous mapping theorem and \thref{lemma_konv_alternative} 
\[
\Big(\hat{k}/n,\tilde{U}_{n,1}(t),\tilde{U}_{n,2}(t)\Big)\xrightarrow{d}\Big(\zeta,\tilde{W}_{H,1}(t),\tilde{W}_{H,2}(t)\Big).
\]
Applying the mapping $(z,x(t),y(t))\mapsto (x(z),y(z))$ to both vectors finishes the proof.
\end{proof}
\par

\begin{proof}[\textbf{Proof of \thref{theorem_alternative}}]\par 
By \thref{lemma_ld_alternative},
\begin{align*}
T_{n,1} &= \hat{k}^{-3/2}\max_{1\leq k\leq \hat{k}}\Big|\sum_{i=1}^{k}\sum_{j=k+1}^{\hat{k}}(1_{\{\xi_i\leq \xi_j\}}-1/2)\Big|\\
&= \frac{n^{d+3/2}c_d}{\hat{k}^{3/2}}\frac{1}{n^{d+3/2}c_d}\max_{1\leq k\leq \hat{k}}\Big|\sum_{i=1}^{k}\sum_{j=k+1}^{\hat{k}}(1_{\{\xi_i\leq \xi_j\}}-1/2)\Big|= \frac{n^{d+3/2}c_d}{\hat{k}^{3/2}} O_P(1).
\end{align*}
Similar argument yields $T_{n,2}= \frac{n^{d+3/2}c_d}{(n-\hat{k})^{3/2}}O_P(1)$. Thus, to prove \thref{theorem_alternative} it remains to show $\frac{n^{d+3/2}c_d}{\hat{k}^{3/2}}\rightarrow_p\infty$ and $\frac{n^{d+3/2}c_d}{(n-\hat{k})^{3/2}}\rightarrow_p\infty$. The proof of \thref{lemma_ld_alternative} yields $\hat{k}/n\xrightarrow{d} \zeta$, where $\zeta$ is given in (\ref{xi}), and hence, $(\hat{k}/n)^{3/2}$ and $((n-\hat{k})/n)^{3/2}$ are asymptotically bounded away from zero. Since $d>0$, $n^{d}\rightarrow\infty$ as $n\rightarrow\infty$. Thus, $T_{n,1}\rightarrow_p\infty$ and $T_{n,2}\rightarrow_p\infty$. This finishes the proof of \thref{theorem_alternative}.
\end{proof}

\section*{Data Availability}
The data used in Section 4 are property of the German state Saxony and are therefore not openly available but can be requested by the Saxon State Office for Environment, Agriculture and Geology.

\section*{Acknowledgement}
The author would like to thank Herold Dehling, Liudas Giraitis and Isabel Garcia for valuable discussions. The research was supported by the Collaborative Research Centre 823 \emph{Statistical modelling of nonlinear dynamic processes} and the Konrad-Adenauer-Stiftung. The author thanks Svenja Fischer for providing the hydrologic data set.

\

\footnotesize

\end{document}